\newtheorem{theorem}{Theorem}[section]
\newcommand{\qed}{\qquad$\blacksquare$\\}
\newtheorem{proposition}[theorem]{Proposition}
\newtheorem{lemma}[theorem]{Lemma}
\newtheorem{corollary}[theorem]{Corollary}
\newtheorem{definition}[theorem]{Definition}
\numberwithin{equation}{section}
\begin{document} 

\title{A Mathematical Aspect of A Tunnel-Junction \\ 
for Spintronic Qubit}

\author{Masao Hirokawa\thanks{Department of Mathematics, Okayama University, 
Okayama, 700-8530, Japan ({\tt hirokawa@math.okayama-u.ac.jp}).}
        \and Takuya Kosaka\thanks{Department of Mathematics, 
Okayama University, Okayama, 700-8530, Japan.}}

\date{\today}

\markright{M. Hirokawa \& T. Kosaka}
\pagestyle{myheadings}

\maketitle

\begin{abstract}
We consider the Dirac particle living in the $1$-dimensional 
configuration space with a junction for a spintronic qubit. 
We give concrete formulae explicitly showing the one-to-one 
correspondence between every self-adjoint extension of 
the minimal Dirac operator and the boundary condition of 
the wave functions of the Dirac particle. 
We then show that the boundary conditions are classified into 
two types: one of them is characterized by two parameters and 
the other is by three parameters. 
Then, we show that Benvegn\`{u} and D\c{a}browski's four-parameter 
family can actually be characterized by three parameters, 
concerned with the reflection, penetration, and phase factor.  
\end{abstract}

\pagestyle{myheadings}
\thispagestyle{plain}
\markboth{M. HIROKAWA AND T. KOSAKA}{Tunnel-Junction Formulae for Spintronic Qubit}

\section{Introduction} 

The current cutting-edge technology has developed 
seeking the realization of quantum information 
and quantum computation. 
For such realization, the mathematical modeling of the system of 
some quantum devices will play an important role. 
One of the candidates for qubit is electron spin, 
that is, the so-called \textit{electron-spin qubit} or 
the \textit{spintronic qubit}, 
from the point of the view of spintronics 
\cite{AFS02, BGMBPA06, Burkard06, LLPA02, LFA05, LDV98}. 
It is remarkable that the transportation of 
a single electron as a qubit has been demonstrated 
experimentally \cite{UniTokyo11, Cambridge11} 
as well as the spin-conserved transport tunneling through a junction 
was experimentally demonstrated in organic spintronics 
\cite{MIT07,Tohoku13}. 
In addition, the spin-flip and the phase-shift was demonstrated 
in the experiment for the spin state of an electron-hole 
pair in a semiconductor quantum dot \cite{GBPSSPFRS10}, 
while the spin alignment was studied 
in the process of the spin-transport in organic spintronics 
\cite{Eindhoven09}. 
The experimental development makes us hopeful that 
an integrated circuit for qubit and a quantum network 
are really demonstrated in future.

This paper deals with the Dirac particle living in the configuration space 
consisting of the two quantum wires and a junction for the spintronic qubit. 
Although we actually have to determine a concrete physical object for the junction, 
we regard the junction as a black box 
so that it has mathematical arbitrariness. 
We regard the wires as the union of the intervals, 
$(-\infty,-\Lambda)\cup(\Lambda , \infty)$, $\Lambda\ge 0$, 
for mathematical simplicity. 
Namely, the segment $[-\Lambda , \Lambda]$ with length $2\Lambda$ 
plays a role of the junction. 
Many physicists have investigated the individual boundary condition 
of the wave functions of the Dirac particle 
for the corresponding self-adjoint extension of the Dirac operator in the case 
of the point interaction (i.e., $\Lambda=0$) 
\cite{AGH-KH,ADeV00,BD94,D-AM89,Seba89}.  
Meanwhile, the Dirac operator consists of the combination 
of the Dirac matrices and the momentum operator of electron. 
The boundary conditions of the self-adjoint extensions 
of the momentum operators 
have been studied by mathematicians \cite{Hir00,PT13}. 
We note that there is a general theory in mathematics, 
called boundary triple, 
to handle the boundary condition, and the theory has still been developed 
\cite{BMN08,BNRRW08,BGP08,BMN02,CMP13}. 
In Refs.\cite{FHNS10, SH11}  the appearance of a phase factor 
was proved for the Schr\"{o}dinger particle 
under the same mathematical set-up as ours. 
For our Dirac particle, in the case where $\Lambda=0$, 
Benvegn\`{u} and D\c{a}browski showed that 
a phase factor appears in their four-parameter family 
(see Eq.(15) of Ref.\cite{BD94}). 
In addition, they showed how the boundary condition affects the spin. 
The description of boundary condition by the individual one-parameter families have been studied 
in Refs.\cite{AGH-KH,D-AM89,Hughes97,Seba89} (also see Eqs.(17) and (18) of Ref.\cite{BD94}). 
We will go ahead and make an in-depth research for the Dirac particle. 
Thus, we employ the minimal Dirac operator for the Hamiltonian. 
In this paper, we follow the machinery in Refs.\cite{BD94,FHNS10,Hir00,PT13,SH11} 
based on the von Neumann's theory \cite{RS2,weidmann}, 
in which all the self-adjoint extensions of 
our minimal Dirac operator are parameterized 
by $U\in U(2)$, where $U(2)$ is unitary group of degree $2$. 

We prove that all the boundary conditions of wave functions of our Dirac particle 
are completely classified into the two types (See Corollary \ref{cor:theo-2}). 
One of them is the type that states 
the wave functions do not pass through the junction, 
and described by two parameters, $\gamma_{L}, \gamma_{R}\in\mathbb{C}$ 
with $|\gamma_{L}|=|\gamma_{R}|=1$, 
concerned with the reflections at $-\Lambda$ and at $+\Lambda$, respectively. 
The other is the type that states 
the wave functions do pass through the junction, 
and described by Benvegn\`{u} and D\c{a}browski's four-parameter family 
(see Theorem \ref{theo:1}, Theorem \ref{theo:2}, and Proposition \ref{prop:06}). 
We give our concrete formulae showing the one-to-one correspondence 
between the boundary conditions of wave functions of our Dirac particle 
and the self-adjoint extensions of the minimal Dirac operator 
(see  Theorem \ref{theo:2}, 
and Propositions \ref{prop:03'}, \ref{prop:inverse}, \ref{prop:06}). 
Our formulae states that Benvegn\`{u} and D\c{a}browski's four-parameter family 
can be characterized by three parameters, $\gamma_{1}, \gamma_{2}, \gamma_{3} 
\in \mathbb{C}$ with $|\gamma_{1}|^{2}+|\gamma_{2}|^{2}=|\gamma_{3}|=1$, 
concerned with the reflection, penetration, and phase factor, respectively 
(see Corollary \ref{cor:BD->3}). 
To obtain our formulae, we invent our representation of $U(2)$ in Proposition \ref{prop:03'}. 
We then realize that all the boundary conditions are independent of $2\Lambda$, 
the length of the junction, which is different from the case for the Schr\"{o}dinger 
operator \cite{FHNS10,HK13-S,SH11}. 
Through our mathematical toy model we propose a mathematically fundamental idea 
for a tunnel-junction device 
in the light of the industry of quantum engineering.

\section{Mathematical Notations and Notions} 

We always assume that every Hilbert space $\mathcal{H}$ 
that we handle in this paper is separable. 
We denote by $\langle\,\,\,|\,\,\,\rangle_{\mathcal{H}}$ 
the inner product of the Hilbert space $\mathcal{H}$, 
where we suppose that the right hand side of the inner product 
$\langle\,\,\,|\,\,\,\rangle_{\mathcal{H}}$ is linear. 
We here prepare some notations and notions using in 
operator theory.

Let $\mathcal{L}(\mathcal{H})$ 
denote the set of all (linear) operators acting in $\mathcal{H}$. 
We always omit the word of `linear' from the notion of 
linear operator because 
we consider only linear operators in this paper. 
So, when we write, $A \in \mathcal{L}(\mathcal{H})$, 
it means that $A$ is an operator acting in the Hilbert space $\mathcal{H}$. 
That is, there is a subspace $D(A)\subset \mathcal{H}$ 
so that $A$ is a linear map from $D(A)$ to $\mathcal{H}$. 
We call $D(A)$ the \textit{domain} of the operator $A$. 
We say that an operator $B\in \mathcal{L}(\mathcal{H})$ 
is an \textit{extension} of the operator $A\in \mathcal{L}(\mathcal{H})$, 
provided that 
$D(A)\subset D(B)$ and $A\psi=B\psi$ for every $\psi\in D(A)$.  
We denote it as $A\subset B$ or 
$B\supset A$. 
The operator equation is defined in the following: 
an operator $A\in \mathcal{L}(\mathcal{H})$ 
is \textit{equal} to an operator $B\in \mathcal{L}(\mathcal{H})$ 
if and only if $D(A)= D(B)$ and 
$A\psi=B\psi$ for every $\psi\in D(A)=D(B)$.  
In particular, the operator equation, $A=B$, is equivalent to 
the conditions, $A\subset B$ and $A\supset B$. 
For every operator $A\in \mathcal{L}(\mathcal{H})$ 
and subspace $S\subset\mathcal{H}$ with $S\subset D(A)$, 
the operator $A\lceil{S}\in\mathcal{L}(\mathcal{H})$ is defined by 
$D(A\lceil{S}):=S$ and $A\lceil{S}\psi:=A\psi$ 
for every $\psi\in S$. 
We call $A\lceil{S}$ the \textit{restriction} of 
the operator $A$ on the subspace $S$.
In particular, the equation, 
$B\lceil{D(A)}=A$, holds for operators 
$A, B\in \mathcal{L}(\mathcal{H})$ with 
$A\subset B$. 
Let $I_{\mathcal{H}}$ or 
$I$ denote 
the identity operator on $\mathcal{H}$, i.e., 
$I_{\mathcal{H}}\psi\equiv I\psi:=
\psi$ for every $\psi\in\mathcal{H}$. 

An operator $A\in \mathcal{L}(\mathcal{H})$ is said 
to be {\it closed} when the graph $\mathcal{G}(A)$ 
of the operator $A$ is closed in $\mathcal{H}\times \mathcal{H}$, 
where $\mathcal{G}(A):=\left\{ (\psi\, ,\, A\psi) 
\in \mathcal{H}\times\mathcal{H}\, |\, 
\psi\in D(A)\right\}$. 
In other words, if $D(A)\ni\psi_{n}\to\psi\in\mathcal{H}$ 
and $A\psi_{n}\to\phi$ in $\mathcal{H}$ as $n\to\infty$, 
then $\psi\in D(A)$ and 
$A\psi=\phi$. 
We say that an operator $A\in\mathcal{L}(\mathcal{H})$ 
is \textit{densely defined} 
when its domain $D(A)$ is dense in $\mathcal{H}$.

Let an operator $A\in \mathcal{L}(\mathcal{H})$ be densely defined. 
We define a subspace $D_{A^{*}}$ of $\mathcal{H}$ by 
$$
D_{A^{*}} := 
\biggl\{ 
\psi\in\mathcal{H}\, \bigg|\, 
\textrm{there is}\,\,\, \phi_{\psi}\in\mathcal{H}\,\,\, 
\textrm{so that}\,\,\, 
\langle\psi|A\varphi\rangle_{\mathcal{H}}=
\langle\phi_{\psi}|\varphi\rangle_{\mathcal{H}}\,\,\, 
\textrm{for every}\,\,\, 
\varphi\in D(A)\biggr\}. 
$$ 
We note $\phi_{\psi}$ is uniquely determined then.  
The \textit{adjoint operator} $A^{*}$ of $A$ is defined by 
$D(A^{*}) := D_{A^{*}}$ and $A^{*}\psi := \phi_{\psi}$. 
It is well known that $A^{*}$ is closed and 
$\langle\psi|A\varphi\rangle_{\mathcal{H}}=
\langle A^{*}\psi|\varphi\rangle_{\mathcal{H}}$. 
A densely defined operator $A\in \mathcal{L}(\mathcal{H})$ 
is \textit{symmetric} when the condition, $A\subset A^{*}$, holds. 
We say that a densely defined operator $A\in \mathcal{L}(\mathcal{H})$ 
is \textit{self-adjoint} if and only if the operator equation, 
$A=A^{*}$, holds. 
We emphasize that self-adjointness requires $D(A)=D(A^{*})$, 
while symmetry requires $D(A)\subset D(A^{*})$ only.

Let $A_{0} \in \mathcal{L}(\mathcal{H})$ be closed symmetric. 
An operator $A\in \mathcal{L}(\mathcal{H})$ is 
\textit{self-adjoint extension} of $A_{0}$, 
provided that the condition, $A_{0}\subset A$, holds and 
the operator $A$ is self-adjoint. 
For every closed symmetric operator $A_{0}\in\mathcal{L}(\mathcal{H})$, 
we respectively define \textit{deficiency subspaces} $\mathcal{K}_{+}(A_{0})$ 
and $\mathcal{K}_{-}(A_{0})$ by $\mathcal{K}_{\pm}(A_{0})
:=\mathrm{ker}(\pm i-A_{0}^{*})$, and moreover, 
\textit{deficiency indices} $n_{+}(A_{0})$ and $n_{-}(A_{0})$ by 
$n_{\pm}(A_{0}):=\mathrm{dim}\, \mathcal{K}_{\pm}(A_{0})$. 
Namely, $\mathcal{K}_{\pm}(A_{0})$ are the eigenspaces 
of $A_{0}^{*}$, respectively, corresponding to 
the eigenvalues $\pm i$, 
and $n_{\pm}(A_{0})$ are their individual dimensions.

A \textit{unitary operator} $U$ from a Hilbert space $\mathcal{H}_{1}$ 
to a Hilbert space $\mathcal{H}_{2}$ is defined as follows: 
$U$ is a surjective linear map from the Hilbert space $\mathcal{H}_{1}$ 
to the Hilbert space $\mathcal{H}_{2}$, and it satisfies 
the relation, 
$\langle U\psi\, |\, U\varphi\rangle_{\mathcal{H}_{2}}
=\langle \psi\, |\, \varphi\rangle_{\mathcal{H}_{1}}$ 
for every $\psi, \varphi \in\mathcal{H}_{1}$.  

Our argument developed in this paper is based on the following 
proposition: 

\begin{proposition}
\label{prop:von-Neumann}
(von Neumann \cite{RS2,weidmann}): 
Let $A_{0}\in\mathcal{L}(\mathcal{H})$ be closed symmetric. 
\begin{enumerate}
\item[i)] If $n_{+}(A_{0}) = n_{-}(A_{0})$, then 
$A_0$ has self-adjoint extensions. 
\item[ii)] There is a one-to-one correspondence 
between self-adjoint extensions $A$ of $A_{0}$ 
and unitary operators $U : \mathcal{K}_{+}(A_0) 
\to \mathcal{K}_{-}(A_0)$ so that the correspondence is 
given by the following: 
For every unitary operator $U : 
\mathcal{K}_{+}(A_{0}) \to \mathcal{K}_{-}(A_{0})$, 
the corresponding self-adjoint extension $A_U$ 
is defined by 
$$
\left\{ \begin{array}{l}
D(A_U):=\left\{\psi = \psi_{0} + \psi^{+} + U\psi^{+}\, 
\big|\, \psi_{0} \in D(A_0),\, \psi^{+} \in \mathcal{K}_{+}(A_0)
\right\}, \\
A_{U}:=A_{0}^{*} 
\lceil{D(A_U)}, 
\end{array}\right.
$$
and then its operation is 
$A_{U}(\psi_{0} + \psi^{+} + U\psi^{+}) 
= A_{0}\psi_{0} + i\psi^{+} -iU\psi^{+}$. 
Conversely, for every self-adjoint extension $A$ 
of $A_{0}$, there is the corresponding unitary operator 
$U: \mathcal{K}_{+}(A_{0}) 
\to \mathcal{K}_{-}(A_{0})$ 
so that 
$A=A_{U}$. 
\end{enumerate}
\end{proposition}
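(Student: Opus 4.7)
The plan is to derive the bijection directly from the ``first von Neumann decomposition'' of $D(A_0^*)$, after which the classification of self-adjoint extensions reduces to classifying maximal subspaces of $\mathcal{K}_+(A_0) \oplus \mathcal{K}_-(A_0)$ on which a certain boundary sesquilinear form vanishes. Since every symmetric extension $A$ of $A_0$ satisfies $A \subset A^* \subset A_0^*$, it is a restriction of $A_0^*$ to an intermediate domain $D(A_0) \subset D(A) \subset D(A_0^*)$, so the whole problem is one of determining those intermediate domains and deciding which of them give self-adjointness.

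First I would establish the algebraic direct sum decomposition
$$
D(A_0^*) = D(A_0) \dotplus \mathcal{K}_+(A_0) \dotplus \mathcal{K}_-(A_0).
$$
The key step is the orthogonal decomposition in $\mathcal{H}$
$$
\mathcal{H} = \mathrm{Ran}(i - A_0) \oplus \mathcal{K}_+(A_0),
$$
which follows because the symmetry of $A_0$ yields $\|(i-A_0)\psi\|_\mathcal{H}^2 = \|\psi\|_\mathcal{H}^2 + \|A_0\psi\|_\mathcal{H}^2$, so $i-A_0$ is bounded below and, together with closedness of $A_0$, has closed range; taking the orthogonal complement of that range then gives $\mathcal{K}_+(A_0)$ by the definition of the adjoint. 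Applying this decomposition (and its $-i$ analogue) to $(i-A_0^*)\psi$ for any $\psi \in D(A_0^*)$ extracts the three summands. Directness follows from $\mathcal{K}_+(A_0) \cap \mathcal{K}_-(A_0) = \{0\}$ (eigenvectors of $A_0^*$ for distinct eigenvalues) and from $D(A_0) \cap \mathcal{K}_\pm(A_0) = \{0\}$ (a symmetric operator has no non-real eigenvalues).

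Next I would compute the boundary form on the deficiency part. For $\psi = \psi_0 + \psi^+ + \psi^-$ and $\varphi = \varphi_0 + \varphi^+ + \varphi^-$ in $D(A_0^*)$, using $A_0^*\psi^\pm = \pm i\,\psi^\pm$, the symmetry of $A_0$ on $D(A_0)$, and the defining property of the adjoint to dispose of the cross terms, one obtains
$$
\langle \psi \,|\, A_0^*\varphi\rangle_\mathcal{H} - \langle A_0^*\psi \,|\, \varphi\rangle_\mathcal{H} = 2i\bigl(\langle \psi^+ \,|\, \varphi^+\rangle_\mathcal{H} - \langle \psi^- \,|\, \varphi^-\rangle_\mathcal{H}\bigr).
$$
An intermediate restriction $A_0^* \lceil D$ is symmetric iff this form vanishes on $D \times D$, equivalently iff the $\mathcal{K}_+$- and $\mathcal{K}_-$-components of elements of $D$ are linked by a well-defined isometry; it is self-adjoint iff that isometry is a unitary map from $\mathcal{K}_+(A_0)$ onto $\mathcal{K}_-(A_0)$. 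This directly yields the formula $D(A_U) = \{\psi_0 + \psi^+ + U\psi^+\}$ with operation $A_U(\psi_0 + \psi^+ + U\psi^+) = A_0\psi_0 + i\psi^+ - iU\psi^+$, which proves part (ii) and delivers part (i) as a corollary, since a unitary $U \colon \mathcal{K}_+(A_0) \to \mathcal{K}_-(A_0)$ exists precisely when $n_+(A_0) = n_-(A_0)$.

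The main obstacle will be the ``self-adjoint iff $U$ is unitary'' direction: symmetry only needs a partial isometric linkage, and to rule out a proper symmetric extension sitting on top of $A_U$ one has to verify that $A_U^* = A_U$ rather than merely $A_U \subset A_U^*$. The cleanest route is to apply the first von Neumann decomposition a second time, now to $A_U$ itself, and check that its deficiency subspaces are trivial; this reuses the decomposition of $D(A_0^*)$ combined with the unitarity of $U$, and the calculation, while not conceptually deep, is the place where all the pieces must line up consistently.
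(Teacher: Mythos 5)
The paper gives no proof of this proposition: it is quoted as von Neumann's classical extension theorem from \cite{RS2,weidmann}, and your sketch reconstructs precisely the standard argument found in those references (the first von Neumann formula for $D(A_0^{*})$, vanishing of the boundary form, and the final check that $A_U$ has trivial deficiency spaces, with part i) following from the existence of a unitary $\mathcal{K}_{+}(A_0)\to\mathcal{K}_{-}(A_0)$ when $n_{+}(A_0)=n_{-}(A_0)$), so your route is essentially the paper's own source and is sound. One small correction: with the paper's convention $\mathcal{K}_{\pm}(A_0)=\mathrm{ker}(\pm i-A_0^{*})$, the orthogonal complement of $\mathrm{Ran}(i-A_0)$ is $\mathcal{K}_{-}(A_0)$, not $\mathcal{K}_{+}(A_0)$, so the decompositions you invoke should read $\mathcal{H}=\mathrm{Ran}(A_0+i)\oplus\mathcal{K}_{+}(A_0)$ and $\mathcal{H}=\mathrm{Ran}(A_0-i)\oplus\mathcal{K}_{-}(A_0)$; with that bookkeeping fixed, the plan goes through as stated.
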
 

Let us now suppose that $n_{+}(A_{0})=n_{-}(A_{0})=n \in \mathbb{N}$. 
Fix individual complete orthonormal systems, 
$\{ \psi_{j}^{\pm} \}_{j=1}^{n}$, of the deficiency subspaces 
$\mathcal{K}_{\pm}(A_{0})$. 
We identify unitary operators from 
$\mathcal{K}_{+}(A_{0})$ to $\mathcal{K}_{-}(A_{0})$ with 
$n\times n$ unitary matrices making the correspondence by 
$U : \psi_{j}^{+} \longmapsto \sum_{k=1}^{n} u_{jk} \psi_{k}^{-}$, 
$j = 1, \cdots , n$. 
So, we often identify the unitary operator $U$ 
with the unitary matrix $(u_{jk})_{jk}$, 
and write $U\in  U(n)$ in the case $n_{\pm}(A_{0})<\infty$, 
where $U(n)$ denotes the unitary group of degree $n$. 
We say $U$ is \textit{diagonal} if 
$u_{jk} = 0$ with $j\ne k$. 
Otherwise, we say $U$ is \textit{non-diagonal}.

We here introduce some notations concerning 
function spaces that we use in this paper. 
Let $\Omega$ be an open set of the $1$-dimensional Euclidean space 
$\mathbb{R}$, i.e., $\Omega\subset \mathbb{R}$. 
We respectively define function spaces, 
$L^{2}(\Omega)$, $AC^{1}\bigl(\overline{\Omega}\big)$, 
and $AC_{0}^{1}\bigl(\overline{\Omega}\big)$ as follows:
$$
L^{2}(\Omega):=\left\{ f : \Omega\to\mathbb{C}\,\,\, 
\textrm{is the Lebesgue measureble}\,\,\,\bigg|\, 
\int_{\Omega}|f(x)|^{2}dx<\infty\right\},  
$$ 
where the integral is the Lebesgue integral. 
$$
AC^{1}\bigl(\overline{\Omega}\big) := 
\biggl\{ f \in L^{2}(\Omega)\, \bigg| \, 
\textrm{$f$ is absolutely continuous on 
$\overline{\Omega}$},\,\,\, 
\textrm{and}\,\,\, f{\,}' \in L^{2}(\Omega) \biggr\}. 
$$ 
Here $\overline{\Omega}$ denotes the closure of 
the set $\Omega$ in $\mathbb{R}$. 
We note that the differential $f{\,}'(x)$ of the 
absolutely continuous function $f(x)$ exists 
for almost every $x\in\overline{\Omega}$. 
$$
AC_{0}^{1}\bigl(\overline{\Omega}\bigr) := 
\left\{ f \in AC^{1}\bigl(\overline{\Omega}\bigr)\, 
\bigg|\, 
f=0\,\,\, 
\textrm{on}\,\,\, \partial \Omega
\right\}.
$$
Here  
$\partial\Omega$ is the boundary of the set $\Omega$. 
In the case where $+\infty$ (resp. $-\infty$) is in the set $\Omega$, 
the boundary condition in the function space 
$AC_{0}^{1}\bigl(\overline{\Omega}\bigr) $ means 
that $\lim_{x\to+\infty}f(x)=\lim_{x\to+\infty}f{\,}'(x)=0$ 
(resp. $\lim_{x\to-\infty}f(x)=\lim_{x\to-\infty}f{\,}'(x)=0$).

\section{One-Dimensional Dirac Operators} 

In this section we define some $1$-dimensional Dirac operators. 
First up, we define the configuration space in which 
the Dirac particle lives. 
For every $\Lambda \geq 0$, we set two intervals 
$\Omega_{\Lambda, L}$ and $\Omega_{\Lambda, R}$ by 
$\Omega_{\Lambda, L}:=(-\infty, -\Lambda)$ and 
$\Omega_{\Lambda, R}:=(+\Lambda, +\infty)$. 
We often call $\Omega_{\Lambda, L}$ and $\Omega_{\Lambda, R}$ 
the \textit{left island} and the \textit{right island}, 
respectively. 
We set our configuration space $\Omega_{\Lambda}$ by 
$\Omega_{\Lambda}:=\Omega_{\Lambda, L} \cup \Omega_{\Lambda, R}$. 
We define two spaces of functions 
on our configuration space $\Omega_{\Lambda}$ as: 
$$
\mathcal{AC}(\overline{\Omega_{\Lambda}}) 
:= \mathbb{C}^{2}\hat{\otimes} 
AC^{1}(\overline{\Omega_{\Lambda}})\,\,\, 
\textrm{and}\,\,\, 
\mathcal{AC}_{0}(\overline{\Omega_{\Lambda}}) 
:= \mathbb{C}^{2} \hat{\otimes} 
AC_{0}^{1}(\overline{\Omega_{\Lambda}}),
$$
where $\hat{\otimes}$ denotes the algebraic 
tensor product. 
The function spaces $\mathcal{AC}(\overline{\Omega_{\Lambda}})$ 
and $\mathcal{AC}_{0}(\overline{\Omega_{\Lambda}})$ 
are dense in $\mathbb{C}^{2} \otimes L^{2}(\Omega_{\Lambda})$, 
where $\otimes$ denotes the tensor product of 
Hilbert spaces. 
We, however, note the following. 
The Sobolev spaces respectively corresponding to 
$\mathcal{AC}(\overline{\Omega_{\Lambda}})$ 
and $\mathcal{AC}_{0}(\overline{\Omega_{\Lambda}})$ 
have their own Sobolev-space structures 
different from each other 
because of the junction $[-\Lambda,+\Lambda]$ for 
$\Lambda\ge 0$, which implies existence of 
uncountably many self-adjoint extensions of 
the minimal Dirac operator defined below.

For the quantization of a relativistic particle on 
the $1$-dimensional configuration space $\Omega_{\Lambda}$, 
we seek a representation of 
`energy $=\alpha\otimes p+\beta\otimes mI$' 
with matrices $\alpha$ and $\beta$ satisfying 
$\alpha^{2}=\beta^{2}=I$ and
$\alpha\beta+\beta\alpha=0$ 
for the probability interpretation of 
the wave function of electron. 
Here $p$ is the momentum operator and 
$m$ is the mass of electron. 
Then, we have candidates of the representation: 
$\alpha=\sigma_{x}$ or $\sigma_{y}$, 
and $\beta=\sigma_{z}$, 
where $\sigma_{x}, \sigma_{y}, \sigma_{z}$ are 
the Pauli (spin) matrices:  
$$
\sigma_{x}:=
\left(
\begin{array}{cc}
0 & 1 \\ 
1 & 0
\end{array}
\right),\,\,\, 
\sigma_{y}:=
\left(
\begin{array}{cc}
0 & -i \\ 
i & 0
\end{array}
\right),\,\,\, 
\text{and}\,\,\,  
\sigma_{z}:=
\left(
\begin{array}{cc}
1 & 0 \\ 
0 & -1
\end{array}
\right).
$$ 
We employ $\sigma_{x}$ as $\alpha$ throughout this paper.

\begin{definition}
{\rm (Minimal Dirac Operator): 
Let $m \ge 0$ be the mass of electron. 
The $1$-dimensional Dirac operator $H_{0} 
\in \mathcal{L}(\mathbb{C}^{2} \otimes L^{2}(\Omega_\Lambda))$ 
is defined by
$$
\left\{ \begin{array}{l}
D(H_{0}):=
\mathcal{AC}_{0}(\overline{\Omega_{\Lambda}}), \\ 
H_{0}:= 
\sigma_{x} \otimes p 
+ m \sigma_{z} \otimes I_{L^{2}(\Omega_{\Lambda})},
\end{array}\right.
$$
where the momentum operator $p$ is given by 
$p:=\, 
-i\frac{d}{dx}$.  
We call the operator $H_{0}$ the \textit{minimal Dirac operator}.
}  
\end{definition}

The following proposition comes from the well-known facts 
on the momentum operator $p$ and its adjoint operator $p^{*}$: 
\begin{proposition} 
\label{prop:adjoint-operator}
The minimal Dirac operator $H_0$ is closed symmetric. 
Its adjoint operator $H_0^*$ is given by 
$$
\left\{ \begin{array}{l}
D(H_{0}^{*}) = \mathcal{AC}(\overline{\Omega_{\Lambda}}), \\ 
H_{0}^{*}
=\sigma_{x}\otimes p^{*} + m\sigma_{z}\otimes I_{L^{2}(\Omega_{\Lambda})}, 
\end{array}\right.
$$
where
$p^{*}=\, -i\frac{d}{dx}$. 
\end{proposition}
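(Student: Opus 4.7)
The plan is to reduce everything to the corresponding statements for the momentum operator $p = -i\frac{d}{dx}$ on $AC_0^1(\overline{\Omega_\Lambda})$, whose adjoint $p^*$ acts as $-i\frac{d}{dx}$ on $D(p^*) = AC^1(\overline{\Omega_\Lambda})$ (the well-known fact referenced via \cite{Hir00,PT13}). Two simplifications drive the reduction: (i) the matrix factor $\sigma_x$ is bounded, unitary, and self-adjoint on $\mathbb{C}^2$, so $\sigma_x \otimes(\cdot)$ commutes transparently with closures and adjoints; and (ii) $B := m\sigma_z \otimes I$ is bounded self-adjoint on $\mathbb{C}^2 \otimes L^2(\Omega_\Lambda)$, so it preserves closedness and symmetry and satisfies $(A+B)^* = A^* + B$ for densely defined $A$.

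Writing $H_0 = A + B$ with $A := \sigma_x \otimes p$, $D(A) = \mathcal{AC}_0(\overline{\Omega_\Lambda})$, I would first establish closedness of $A$: since $p$ is closed and tensoring with a unitary on a finite-dimensional Hilbert space preserves graph closedness, $A$ is closed, and then so is $H_0 = A + B$. Symmetry of $H_0$ reduces to symmetry of $A$, which is an elementary integration-by-parts on each of $\Omega_{\Lambda,L}$ and $\Omega_{\Lambda,R}$; the boundary terms at $\pm\Lambda$ and at $\pm\infty$ vanish thanks to the defining conditions of $AC_0^1(\overline{\Omega_\Lambda})$.

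For the adjoint, the easy inclusion $\mathcal{AC}(\overline{\Omega_\Lambda}) \subset D(H_0^*)$ with $H_0^*\psi = (\sigma_x \otimes p^*)\psi + (m\sigma_z \otimes I)\psi$ follows again from integration-by-parts, now with boundary terms killed by the zero conditions on the test functions $\varphi \in D(H_0)$ rather than on $\psi$. For the reverse inclusion, take $\psi = (\psi_1,\psi_2)^T \in D(H_0^*)$ with witness $\phi_\psi$ and test the adjoint identity against $\varphi = e_j \otimes f$, $f \in AC_0^1(\overline{\Omega_\Lambda})$, $j=1,2$, where $\{e_1,e_2\}$ is the standard basis of $\mathbb{C}^2$. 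Since $\sigma_x$ swaps the two spinor components while $\sigma_z$ acts diagonally, this decouples into
\[
\langle \psi_2\,|\,pf\rangle = \langle (\phi_\psi)_1 - m\psi_1\,|\,f\rangle, \qquad
\langle \psi_1\,|\,pf\rangle = \langle (\phi_\psi)_2 + m\psi_2\,|\,f\rangle,
\]
which is exactly the defining condition for $\psi_1,\psi_2 \in D(p^*) = AC^1(\overline{\Omega_\Lambda})$. Thus $\psi \in \mathcal{AC}(\overline{\Omega_\Lambda})$, closing the proof.

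The main obstacle is the accurate invocation of the identification $D(p^*) = AC^1(\overline{\Omega_\Lambda})$ for the disconnected open set $\Omega_\Lambda = (-\infty,-\Lambda) \cup (\Lambda,\infty)$. One must verify that $p^*$ imposes no boundary condition at the finite endpoints $\pm\Lambda$, even though functions in $D(p)$ vanish there. This is handled by the standard localization argument: multiplying by a smooth cutoff supported in a single island reduces the problem to the classical adjoint of the momentum operator on a half-line, and distributional regularity ($\psi_j' \in L^2$ as a distribution on each island) promotes $\psi_j$ to $AC^1$ on the closure of that island, with the two islands treated independently.
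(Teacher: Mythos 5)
Your proposal is correct and follows essentially the same route as the paper, which offers no separate argument but simply invokes the well-known facts about the momentum operator $p$ and its adjoint $p^{*}$ on $\Omega_{\Lambda}$; your write-up just makes that reduction explicit (bounded self-adjoint mass term, componentwise testing against $e_{j}\otimes f$, and the standard localization identifying $D(p^{*})$ with $AC^{1}(\overline{\Omega_{\Lambda}})$ island by island). No gaps to report.
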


Since the operation of the adjoint operator $H_{0}^{*}$ 
is the same as that of the minimal Dirac operator $H_{0}$ 
though their domains are different from each other, 
we make the following definition: 

\begin{definition}
{\rm (Maximal Dirac Operator): 
We call the adjoint operator $H_{0}^{*}$ 
the \textit{maximal Dirac operator}. 
}
\end{definition}

Similarly, since the restriction of the adjoint operator $H_{0}^{*}$ 
on every subspace $\mathcal{D}$ with the condition, 
$D(H_{0})\subset\mathcal{D}\subset D(H_{0}^{*})$, 
has the same operation as that of the minimal Dirac operator $H_{0}$, 
we name them in the following: 

\begin{definition}
{\rm For every subspace $\mathcal{D}$ with the condition, 
$D(H_{0})\subset\mathcal{D}\subset D(H_{0}^{*})$, 
we call the restriction $H_{0}^{*}\lceil\mathcal{D}$ 
the \textit{Dirac operator} in this paper. 
}
\end{definition}

\section{Main Results}

Set constants $\mu$ and $N$ by $\mu:=(1+im)/\sqrt{1+m^{2}}$ 
and 
$N:=(1+m^{2})^{\frac{1}{4}}e^{-\sqrt{1+m^{2}}\Lambda}$, respectively. 
We define functions $\psi_{L}^{+}(x)$ and $\psi_{R}^{+}(x)$ by
\begin{equation}
\left\{ \begin{array}{l}
\psi_{L}^{+}(x)
\equiv 
\left(
\begin{array}{c}
\psi_{L\uparrow}^{+}(x) \\ 
\psi_{L\downarrow}^{+}(x)
\end{array}
\right)
:= 
N{\displaystyle 
\left(
\begin{array}{c}
1 \\ 
-\mu
\end{array}
\right)
\otimes\chi_{L}(x)e^{\sqrt{1+m^{2}}\, x}
}, 
\vspace*{3mm} \\ 
\psi_{R}^{+}(x)
\equiv 
\left(
\begin{array}{c}
\psi_{R\uparrow}^{+}(x) \\ 
\psi_{R\downarrow}^{+}(x)
\end{array}
\right)
:= 
N{\displaystyle 
\left(
\begin{array}{c}
1 \\ 
\mu
\end{array}
\right)
\otimes \chi_{R}(x)e^{-\sqrt{1+m^{2}}\, x}
},  
\end{array}\right.
\label{eq:+eigenfunctions}
\end{equation}
and functions $\psi_{L}^{-}(x)$ and $\psi_{R}^{-}(x)$ by
\begin{equation}
\left\{ \begin{array}{l}
\psi_{L}^{-}(x)
\equiv 
\left(
\begin{array}{c}
\psi_{L\uparrow}^{-}(x) \\ 
\psi_{L\downarrow}^{-}(x)
\end{array}
\right)
:= 
N{\displaystyle 
\left(
\begin{array}{c}
1 \\ 
\mu^{*}
\end{array}
\right)
\otimes\chi_{L}(x)e^{\sqrt{1+m^{2}}\, x}
}, 
\vspace*{3mm} \\ 
\psi_{R}^{-}(x)
\equiv 
\left(
\begin{array}{c}
\psi_{R\uparrow}^{-}(x) \\ 
\psi_{R\downarrow}^{-}(x)
\end{array}
\right)
:= 
N{\displaystyle 
\left(
\begin{array}{c}
1 \\ 
-\mu^{*}
\end{array}
\right)
\otimes\chi_{R}(x)e^{-\sqrt{1+ m^{2}}\, x}
}.  
\end{array}\right.
\label{eq:-eigenfunctions}
\end{equation}
Here $\chi_{L}$ and $\chi_{R}$ are respectively 
the characteristic functions on the closure 
$\overline{\Omega_{\Lambda, L}}$ of the left island 
and the closure $\overline{\Omega_{\Lambda, R}}$ 
of the right island.

As proved in \S\ref{subsec:proof-deficiency-indices}, 
we can compute the deficiency indices 
of the minimal Dirac operator $H_{0}$ in the following: 
\begin{proposition}
\label{prop:deficiency-indices} 
The deficiency indices 
of the minimal Dirac operator are $n_{+}(H_{0})=n_{-}(H_{0})=2$ 
and therefore the minimal Dirac operator $H_{0}$ has 
self-adjoint extensions. 
Then, the deficiency subspaces are given by 
$$\mathcal{K}_{+}(H_{0})=
\left\{ 
c_{L}\psi_{L}^{+}+c_{R}\psi_{R}^{+}\, |\, 
c_{L}, c_{R}\in\mathbb{C}\right\}\,\,\, 
\textrm{and}\,\,\, 
\mathcal{K}_{-}(H_{0})=
\left\{ 
c_{L}\psi_{L}^{-}+c_{R}\psi_{R}^{-}\, |\, 
c_{L}, c_{R}\in\mathbb{C}\right\}.
$$
\end{proposition}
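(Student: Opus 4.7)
The plan is to compute $\mathcal{K}_{\pm}(H_{0}) = \ker(\pm i - H_{0}^{*})$ by directly solving the eigenvalue equation, using the explicit description of $H_{0}^{*}$ from Proposition~\ref{prop:adjoint-operator}. Writing a prospective eigenfunction as $\psi = (f, g)^{T}$ with $f, g \in AC^{1}(\overline{\Omega_{\Lambda}})$, and expanding $\sigma_{x} \otimes p^{*} + m \sigma_{z} \otimes I$ with $p^{*} = -i\, d/dx$, the equation $H_{0}^{*}\psi = \pm i \psi$ translates into the first-order system
\begin{equation*}
-i g'(x) + m f(x) = \pm i f(x), \qquad -i f'(x) - m g(x) = \pm i g(x),
\end{equation*}
to be solved on each of the two disjoint islands $\Omega_{\Lambda, L}$ and $\Omega_{\Lambda, R}$ separately, since $\Omega_{\Lambda}$ is disconnected and $\mathcal{AC}(\overline{\Omega_{\Lambda}})$ imposes no coupling between the two sides.

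Next, I will decouple the system. Differentiating one equation and substituting the other yields the scalar equations
$$f''(x) = (1+m^{2}) f(x), \qquad g''(x) = (1+m^{2}) g(x),$$
whose general solutions are linear combinations of $e^{\sqrt{1+m^{2}}\, x}$ and $e^{-\sqrt{1+m^{2}}\, x}$. The $L^{2}$-integrability required by membership in $\mathcal{AC}(\overline{\Omega_{\Lambda}})$ kills the growing exponential on each island: only $e^{\sqrt{1+m^{2}}\, x}$ survives on $\Omega_{\Lambda, L} = (-\infty, -\Lambda)$ and only $e^{-\sqrt{1+m^{2}}\, x}$ survives on $\Omega_{\Lambda, R} = (\Lambda, \infty)$. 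Hence on each island $f$ is determined up to a single complex constant, and the first-order system then fixes $g$ as a scalar multiple of $f$.

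It remains to pin down the proportionality constant $g/f$ in each of the four cases (two eigenvalues, two islands). A short algebraic computation gives the ratio $-\mu$ on the left and $+\mu$ on the right for eigenvalue $+i$, and $+\mu^{*}$ on the left and $-\mu^{*}$ on the right for eigenvalue $-i$, where $\mu = (1+im)/\sqrt{1+m^{2}}$. Matching with (\ref{eq:+eigenfunctions}) and (\ref{eq:-eigenfunctions}) identifies the solutions with $\psi_{L}^{\pm}, \psi_{R}^{\pm}$ up to the overall constant $N$. A quick check shows these functions are absolutely continuous on the closure of each island with derivatives in $L^{2}$, hence lie in $D(H_{0}^{*})$, and are manifestly linearly independent because of their disjoint supports. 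This gives $\dim \mathcal{K}_{\pm}(H_{0}) = 2$, so $n_{+}(H_{0}) = n_{-}(H_{0}) = 2$, and part i) of Proposition~\ref{prop:von-Neumann} (von Neumann) delivers the existence of self-adjoint extensions.

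The only delicate point is the bookkeeping of the spinor ratios $g/f$: the matrix action of $\sigma_{x}$ swaps components while $p^{*}$ acts inside, so sign and complex-conjugation mistakes are easy to make. Everything else — the ODE decoupling, the $L^{2}$ pruning, and the dimension count — is routine.
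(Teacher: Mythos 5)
Your proposal is correct and follows essentially the same route as the paper: both solve the eigenvalue system $H_{0}^{*}\psi=\pm i\psi$ from Proposition \ref{prop:adjoint-operator} on each island, discard the non-$L^{2}$ exponential, identify the resulting spinors with $\psi_{L}^{\pm},\psi_{R}^{\pm}$, and conclude $n_{\pm}(H_{0})=2$ via their independence (disjoint supports/orthogonality) together with Proposition \ref{prop:von-Neumann}. Your computed ratios $g/f=-\mu,+\mu$ for $+i$ and $+\mu^{*},-\mu^{*}$ for $-i$ agree with Eqs.(\ref{eq:+eigenfunctions})--(\ref{eq:-eigenfunctions}), so the bookkeeping you flagged as delicate is in fact right.
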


By Propositions \ref{prop:von-Neumann} and \ref{prop:deficiency-indices} 
we can represent every self-adjoint extension of 
the minimal Dirac operator by an element of $U(2)$. 
We denote by $H_{U}$ the self-adjoint extension 
of the minimal Dirac operator $H_{0}$ 
corresponding to $U\in U(2)$.  
Then, Proposition \ref{prop:von-Neumann} says that 
the domain $D(H_{U})$ of the self-adjoint extension $H_{U}$ is 
$$
D(H_{U})=
\left\{\psi = \psi_{0} + \psi^{+} + U\psi^{+}\, 
\big|\, \psi_{0} \in D(H_{0}),\, \psi^{+} \in \mathcal{K}_{+}(H_{0})
\right\}. 
$$
Since $\psi_{0}(\pm\Lambda)=0$ 
for $\psi_{0}\in D(H_{0})
=\mathcal{AC}_{0}(\overline{\Omega_{\Lambda}})$, 
the unitary operator $U\in U(2)$ includes the information 
about how the electron reflects at the boundary and 
how it passes through the junction. 
So, our problem is to derive the boundary condition form 
the information that $U$ has. 
We determine the entries $u_{k\ell}$ 
of $U=(u_{k\ell})_{k,\ell=1,2}$ as follows: 
$$
\left\{ \begin{array}{l}
U\psi_{L}^{+}=u_{11}\psi_{L}^{-}+u_{12}\psi_{R}^{-}, \\ 
U\psi_{R}^{+}=u_{21}\psi_{L}^{-}+u_{22}\psi_{R}^{-}.
\end{array}\right.
$$
The von Neumann theory says that the unitary operator $U$ 
maps the eigenfunction $\psi_{L}^{+}$ living in the left island 
(resp. $\psi_{R}^{+}$ living in the right island) to 
the eigenfunction $\psi_{L}^{-}$ (resp. $\psi_{R}^{-}$) 
staying in the same island with the probability $|u_{11}|^{2}$ 
(resp. $|u_{22}|^{2}$) and the eigenfunction $\psi_{R}^{-}$ 
(resp. $\psi_{L}^{-}$) coming from the opposite island 
with the probability $|u_{12}|^{2}$ (resp. $|u_{21}|^{2}$).

We denote by $\overline{\mathbb{R}}$ 
the set of all extended real numbers: 
$\overline{\mathbb{R}}:=\mathbb{R}\cup\{+\infty\}$. 
For every two parameters $\rho= (\rho_{+}, \rho_{-})\in 
\overline{\mathbb{R}}^{2}$, 
we introduce a boundary condition by 
\begin{equation}
\left\{ \begin{array}{l}
i\rho_{+}\psi_{\uparrow}(+\Lambda)
=\psi_{\downarrow}(+\Lambda)\quad 
\textrm{if $\rho_{+} \in \mathbb{R}$}, \\
\psi_{\uparrow}(+\Lambda)=0\quad 
\textrm{if $\rho_{+}=+\infty$}, \\
i\rho_-\psi_{\uparrow}(-\Lambda)
=\psi_{\downarrow}(-\Lambda)\quad 
\textrm{if $\rho_{-} \in \mathbb{R}$}, \\
\psi_{\uparrow}(-\Lambda)=0\quad 
\textrm{if $\rho_{-}=+\infty$}.
\end{array}\right.  
\label{BC-rho}
\end{equation}

We introduce a class of four parameters 
$\alpha=(\alpha_{1}, \alpha_{2}, \alpha_{3}, \alpha_{4}) 
\in \mathbb{C}^{4}$ in the following: 
\begin{equation}
\left\{ \begin{array}{l}
\Re(\alpha_{1}\alpha_{2}^{*})
=\Re(\alpha_{1}\alpha_{3}^{*})=0, \\ 
\Re(\alpha_{2}\alpha_{4}^{*})
=\Re(\alpha_{3}\alpha_{4}^{*})=0, \\ 
\alpha_{1}\alpha_{4}^{*}+\alpha_{2}\alpha_{3}^{*}
=\alpha_{1}\alpha_{4}^{*}+\alpha_{2}^{*}\alpha_{3}=1.
\end{array}\right. 
\label{Cl-alpha}
\end{equation}
For every $\alpha \in\mathbb{C}^{4}$ in 
the class (\ref{Cl-alpha}), 
we define the \textit{boundary matrix} 
$B_{\alpha} \in M_{2}(\mathbb{C})$ by 
$$
B_{\alpha}:=
\left(
\begin{array}{cc}
\alpha_{1} & \alpha_{2} \\ 
\alpha_{3} & \alpha_{4} 
\end{array}
\right).
$$
We note that our vector $\alpha$ is equivalent 
to the Benvegn\`{u} and D\c{a}browski's four-parameter family 
given in Eq.(15) of Ref.\cite{BD94} 
(see Proposition \ref{prop:06} below).

For every wave function $\psi\in
\mathbb{C}^{2}\otimes L^{2}(\Omega_{\Lambda})$, 
we respectively set the wave function $\psi_{\uparrow}$ with 
\textit{up-spin} and 
the wave function $\psi_{\downarrow}$ with 
\textit{down-spin} by
$$
\psi_{\uparrow}:=
\left(
\begin{array}{c}
1 \\ 
0
\end{array}
\right)
\otimes \psi\qquad 
\textrm{and}\qquad 
\psi_{\downarrow}:=
\left(
\begin{array}{c}
0 \\ 
1
\end{array}
\right)
\otimes \psi.  
$$
Because of the unitarily equivalence, 
$\mathbb{C}^{2} \otimes L^{2}(\Omega_{\Lambda})\cong 
L^{2}(\Omega_{\Lambda}) \oplus L^{2}(\Omega_{\Lambda})$, 
we often identify $\mathbb{C}^{2} \otimes L^{2}(\Omega_{\Lambda})$ 
with $L^{2}(\Omega_{\Lambda}) \oplus L^{2}(\Omega_{\Lambda})$, 
and then, we represent $\psi\in \mathbb{C}^{2} \otimes L^{2}(\Omega_{\Lambda})$ 
as
$$
\psi =
\left(
\begin{array}{c}
\psi_{\uparrow} \\ 
\psi_{\downarrow}
\end{array}
\right)
\equiv 
{{}^{t}}(\psi_{\uparrow} , \psi_{\downarrow})
\in L^{2}(\Omega_{\Lambda}) \oplus L^{2}(\Omega_{\Lambda}).
$$ 
Conforming with this representation, 
we often use the representation: 
$$
\psi(\pm\Lambda)=
\left(
\begin{array}{c}
\psi_{\uparrow}(\pm\Lambda) \\ 
\psi_{\downarrow}(\pm\Lambda)  
\end{array}
\right) 
\equiv 
{{}^{t}}(\psi_{\uparrow}(\pm\Lambda) , \psi_{\downarrow}(\pm\Lambda)) 
\in\mathbb{C}^{2}.
$$
Then, the boundary matrix $B_{\alpha}$ gives a boundary condition: 
\begin{equation}
\left(
\begin{array}{c}
\psi_{\uparrow}(+\Lambda) \\ 
\psi_{\downarrow}(+\Lambda) 
\end{array}
\right)
\equiv 
\psi(+\Lambda) 
= 
B_{\alpha}\psi(-\Lambda)
\equiv 
\left(
\begin{array}{cc}
\alpha_{1} & \alpha_{2} \\ 
\alpha_{3} & \alpha_{4} \\ 
\end{array}
\right)
\left(
\begin{array}{c}
\psi_{\uparrow}(-\Lambda) \\ 
\psi_{\downarrow}(-\Lambda) 
\end{array}
\right). 
\label{BC-alpha}
\end{equation}

\qquad 

As proven in \S\ref{proof-theo-1}, 
the Dirac operators with the following two types of boundary conditions 
are self-adjoint extensions of the minimal Dirac operator:  
\begin{theorem}
\label{theo:1}
\begin{enumerate}
\item[i)] Give a subspace $D(H_{\rho})$ by 
$$
D(H_{\rho}):=
\left\{\psi \in D(H_{0}^{*}) \,\bigg|\, 
\psi\,\,\, \textrm{satisfies the boundary condition 
(\ref{BC-rho})}\right\} 
$$  
for every $\rho\in\overline{\mathbb{R}}^{2}$. 
Then, the Dirac operator $H_{\rho}$ defined 
as the restriction of the maximal Dirac operator $H_{0}^{*}$ 
on $D(H_{\rho})$, i.e., $H_{\rho}:=H_{0}^{*}\lceil{D(H_{\rho})}$, 
is a self-adjoint extension of the minimal Dirac operator $H_{0}$.
\item[ii)] Give a subspace $D(H_{\alpha})$ by 
$$
D(H_{\alpha}):=
\left\{\psi \in D(H_{0}^{*}) \,\bigg|\, 
\psi\,\,\, \textrm{satisfies the boundary condition 
(\ref{BC-alpha})}\right\} 
$$  
for every vector $\alpha$ in the class (\ref{Cl-alpha}). 
Then, the Dirac operator $H_{\alpha}$ defined 
as the restriction of the maximal Dirac operator $H_{0}^{*}$ 
on $D(H_{\alpha})$, i.e., $H_{\alpha}:=H_{0}^{*}\lceil{D(H_{\alpha})}$, 
is a self-adjoint extension of the minimal Dirac operator $H_{0}$. 
\end{enumerate}
\end{theorem}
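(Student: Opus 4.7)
The plan is to combine Proposition \ref{prop:von-Neumann} with Proposition \ref{prop:deficiency-indices}. Since $n_{+}(H_{0})=n_{-}(H_{0})=2$, every self-adjoint extension $A$ of $H_{0}$ has $\dim(D(A)/D(H_{0}))=2$; conversely, $D(H_{0}^{*})/D(H_{0})$ has complex dimension $n_{+}+n_{-}=4$, so any symmetric restriction of $H_{0}^{*}$ whose domain sits at codimension $2$ in $D(H_{0}^{*})$ is automatically a maximal symmetric, hence self-adjoint, extension of $H_{0}$. For both (i) and (ii), the task therefore reduces to (a) verifying symmetry of the candidate operator, and (b) counting the codimension of the boundary condition.

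For (a), I would first compute the boundary sesquilinear form of $H_{0}^{*}$. The mass term $m\sigma_{z}\otimes I_{L^{2}(\Omega_{\Lambda})}$ is pointwise Hermitian and contributes nothing. For the $\sigma_{x}\otimes p^{*}$ piece, integrating by parts on each half-line $\Omega_{\Lambda,L}$ and $\Omega_{\Lambda,R}$ and using that elements of $AC^{1}(\overline{\Omega_{\Lambda}})$ decay at $\pm\infty$ by Sobolev embedding, only the endpoints $x=\pm\Lambda$ survive, yielding
\begin{equation*}
\langle\psi\mid H_{0}^{*}\varphi\rangle-\langle H_{0}^{*}\psi\mid\varphi\rangle
= i\bigl\{[\psi_{\uparrow}^{*}\varphi_{\downarrow}+\psi_{\downarrow}^{*}\varphi_{\uparrow}](\Lambda)-[\psi_{\uparrow}^{*}\varphi_{\downarrow}+\psi_{\downarrow}^{*}\varphi_{\uparrow}](-\Lambda)\bigr\}
\end{equation*}
for all $\psi,\varphi\in D(H_{0}^{*})$. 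Symmetry on a candidate domain then reduces to forcing both boundary contributions to vanish separately.

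For part (i), when $\rho_{+}\in\mathbb{R}$ the relation $\psi_{\downarrow}(\Lambda)=i\rho_{+}\psi_{\uparrow}(\Lambda)$ (together with its analogue for $\varphi$) gives $[\psi_{\uparrow}^{*}\varphi_{\downarrow}+\psi_{\downarrow}^{*}\varphi_{\uparrow}](\Lambda)=(i\rho_{+}-i\rho_{+})\psi_{\uparrow}^{*}(\Lambda)\varphi_{\uparrow}(\Lambda)=0$, where the reality of $\rho_{+}$ is crucial. The degenerate case $\rho_{+}=+\infty$ forces $\psi_{\uparrow}(\Lambda)=\varphi_{\uparrow}(\Lambda)=0$, killing both summands directly. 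The identical argument handles $-\Lambda$. Since each boundary point imposes exactly one $\mathbb{C}$-linear constraint on the two-component boundary spinor in $\mathbb{C}^{2}$, (\ref{BC-rho}) cuts the codimension of $D(H_{\rho})$ inside $D(H_{0}^{*})$ to exactly $2$, completing (i).

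For part (ii), substituting $\psi(\Lambda)=B_{\alpha}\psi(-\Lambda)$ and $\varphi(\Lambda)=B_{\alpha}\varphi(-\Lambda)$ into the boundary form converts symmetry into the single matrix identity $B_{\alpha}^{\dagger}\sigma_{x}B_{\alpha}=\sigma_{x}$ (with $B_{\alpha}^{\dagger}$ the Hermitian conjugate). Expanding the product entrywise, the diagonal entries equal $2\Re(\alpha_{1}^{*}\alpha_{3})$ and $2\Re(\alpha_{2}^{*}\alpha_{4})$, both vanishing by the first two lines of (\ref{Cl-alpha}); the two off-diagonal entries read $\alpha_{1}^{*}\alpha_{4}+\alpha_{3}^{*}\alpha_{2}$ and $\alpha_{4}^{*}\alpha_{1}+\alpha_{2}^{*}\alpha_{3}$, both equal to $1$ by the third line of (\ref{Cl-alpha}), using that the two stated forms of that line together force $\alpha_{2}\alpha_{3}^{*}\in\mathbb{R}$, so that $\alpha_{3}^{*}\alpha_{2}=\alpha_{2}^{*}\alpha_{3}$. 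The codimension is again $2$ since (\ref{BC-alpha}) expresses $\psi(\Lambda)$ linearly in the free data $\psi(-\Lambda)\in\mathbb{C}^{2}$. The main obstacle I anticipate is exactly this algebraic matching: the class (\ref{Cl-alpha}) is somewhat asymmetrically stated (two real-part conditions in each of its first two lines, and a complex equation appearing in two forms), and one must check that its constraints are neither stronger nor weaker than what $B_{\alpha}^{\dagger}\sigma_{x}B_{\alpha}=\sigma_{x}$ demands; once that verification is in hand, the rest of the argument is routine bookkeeping.
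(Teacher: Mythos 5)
Your route is genuinely different from the paper's. The paper never counts dimensions: it proves symmetry from the boundary form (its Eq.~(\ref{eq:theo-1-1}), the same identity you derive, up to an overall sign convention) and then shows $D(H_{\rho}^{*})\subset D(H_{\rho})$ and $D(H_{\alpha}^{*})\subset D(H_{\alpha})$ directly, by testing the adjoint against functions with prescribed boundary values; producing such functions is the whole content of its Lemma \ref{lem:02}. You instead verify symmetry and then invoke a von Neumann dimension count. Your symmetry verifications are correct: the vanishing of the boundary form under (\ref{BC-rho}) uses exactly the reality of $\rho_{\pm}$ as you say, and your repackaging of the computation for (\ref{BC-alpha}) as the single identity $B_{\alpha}^{\dagger}\sigma_{x}B_{\alpha}=\sigma_{x}$, checked entrywise from the class (\ref{Cl-alpha}) (including the observation that the third line forces $\alpha_{2}\alpha_{3}^{*}\in\mathbb{R}$), is a cleaner formulation than the paper's expanded bracket.

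The gap is in step (b), which is precisely where the paper does its work. You assert that (\ref{BC-rho}) and (\ref{BC-alpha}) cut codimension \emph{exactly} $2$ in $D(H_{0}^{*})$, but ``each boundary point imposes exactly one $\mathbb{C}$-linear constraint'' presupposes that the boundary functionals are nonzero and independent on $D(H_{0}^{*})$, i.e.\ that the trace map $\psi\mapsto(\psi(-\Lambda),\psi(+\Lambda))\in\mathbb{C}^{4}$ is (sufficiently) surjective. This is true but must be proved; it is what the paper's Lemma \ref{lem:02} supplies by explicit construction, and it can also be read off from the deficiency elements (\ref{eq:+eigenfunctions})--(\ref{eq:-eigenfunctions}), whose boundary values span $\mathbb{C}^{2}$ at each endpoint since $\Re\mu\neq0$, with the left (resp.\ right) elements vanishing at $+\Lambda$ (resp.\ $-\Lambda$). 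Alternatively you can avoid trace surjectivity altogether: two linear equations give codimension \emph{at most} $2$, i.e.\ $\dim\bigl(D(H_{\rho})/D(H_{0})\bigr)\geq 2$, while symmetry plus the classification of closed symmetric extensions (isometries from subspaces of $\mathcal{K}_{+}(H_{0})$ into $\mathcal{K}_{-}(H_{0})$) gives $\dim\bigl(D(H_{\rho})/D(H_{0})\bigr)\leq n_{+}(H_{0})=2$, so equality holds and the isometry is unitary. Note in this connection that ``maximal symmetric, hence self-adjoint'' is loosely put: maximal symmetric operators need not be self-adjoint; what you actually use is that a closed symmetric extension of dimension $n_{+}=n_{-}$ over $D(H_{0})$ corresponds to a unitary $\mathcal{K}_{+}(H_{0})\to\mathcal{K}_{-}(H_{0})$. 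You also rely on the first von Neumann formula ($\dim D(H_{0}^{*})/D(H_{0})=n_{+}+n_{-}$) and this classification, which go beyond the statement of Proposition \ref{prop:von-Neumann} as quoted in the paper, though they are standard facts from the cited references. With these points made explicit (and the trivial remark that $D(H_{0})\subset D(H_{\rho}), D(H_{\alpha})$ because elements of $\mathcal{AC}_{0}(\overline{\Omega_{\Lambda}})$ vanish at $\pm\Lambda$), your argument closes and is a legitimate, somewhat more structural alternative to the paper's direct computation of the adjoint domains.
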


Before stating our second theorem, 
we introduce a device for the representation 
of $U(2)$. 
In general, the so-called homomorphism theorem tells us that 
$U(n)/SU(n)\cong U(1)$ for each $n\in\mathbb{N}$, 
where $SU(n)$ is the special unitary group of degree $n$. 
In this paper we seek another representation of 
the unitary group $U(2)$ making good use of the degree, 
$n=2$, that we handle now. 
The following proposition will be 
proved in \S\ref{subsec:proof-prop-03'}: 
\begin{proposition}
\label{prop:03'} 
The unitary group $U(2)$ has the following representation: 
$$
U(2)=U(1)\, S\mathbb{H}
=\left\{\gamma_{3}
\left(
\begin{array}{cc}
\gamma_{1} & - \gamma_{2}^{*} \\ 
\gamma_{2} & \gamma_{1}^{*} 
\end{array}
\right)\, \bigg|\, 
\gamma_{1}, \gamma_{2}, \gamma_{3} \in \mathbb{C},\,\,\, 
|\gamma_{1}|^{2}+|\gamma_{2}|^{2}=|\gamma_{3}|= 1
\right\}. 
$$
Here $S\mathbb{H}$ is defined by  
$$
S\mathbb{H}:=
\left\{ 
A\in \mathbb{H}\, |\, \mathrm{det}\, A=1
\right\}
$$
for the Hamilton quaternion field $\mathbb{H}$ 
consisting of $2\times 2$ matrices. 
\end{proposition}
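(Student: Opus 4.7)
The plan is to prove the two set-theoretic inclusions separately. First I would verify the inclusion $U(1)\cdot S\mathbb{H}\subset U(2)$ by direct computation: for $M=\begin{pmatrix}\gamma_1 & -\gamma_2^* \\ \gamma_2 & \gamma_1^*\end{pmatrix}$ with $|\gamma_1|^2+|\gamma_2|^2=1$, a one-line calculation gives $M^*M=(|\gamma_1|^2+|\gamma_2|^2)I=I$, so $M$ is unitary, and left-multiplying by a unimodular scalar $\gamma_3$ preserves unitarity. In particular, every $M\in S\mathbb{H}$ is automatically in $SU(2)$.

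For the converse inclusion $U(2)\subset U(1)\cdot S\mathbb{H}$, I would start from an arbitrary $U\in U(2)$ and exploit the fact that $|\det U|=1$. Choosing any square root $\gamma_3$ of $\det U$ (both choices work, so uniqueness of the decomposition is not claimed or needed), one has $|\gamma_3|=1$, and I would set $M:=\gamma_3^{-1}U$, so that $M$ is unitary with $\det M=1$. Writing $M=\begin{pmatrix}a&b\\c&d\end{pmatrix}$, the idea is to compare two expressions for the inverse: $M^{-1}=M^*$ by unitarity, and $M^{-1}=\mathrm{adj}(M)$ since $\det M=1$. Reading off entries immediately yields $d=a^*$ and $c=-b^*$, while unitarity forces $|a|^2+|c|^2=1$. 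Setting $\gamma_1:=a$ and $\gamma_2:=c$ then delivers the required representation $U=\gamma_3 M$.

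There is no substantial obstacle here: the proposition amounts essentially to the standard decomposition $U(2)=U(1)\cdot SU(2)$ combined with the identification of $SU(2)$ with the unit quaternions $S\mathbb{H}$ via the $2\times 2$ matrix representation of $\mathbb{H}$. The only step needing slight care is the passage from ``unitary with determinant $1$'' to the explicit quaternionic form, but this is handled cleanly by the determinant-adjugate comparison above. I expect the author's proof to follow essentially this route, with the inclusion $U(1)\cdot S\mathbb{H}\subset U(2)$ dispatched as a routine check and the reverse inclusion carried out by the normalization $M=\gamma_3^{-1}U$.
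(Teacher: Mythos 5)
Your proof is correct, and its overall skeleton matches the paper's: both arguments factor an arbitrary $U\in U(2)$ as a unimodular scalar $\gamma_{3}$ (a square root of $\det U$) times a matrix in $SU(2)$, and both rest on the identification $SU(2)=S\mathbb{H}$. The difference lies in how that identification and the normalization are carried out. The paper first proves a separate lemma (its Lemma 5.2) establishing $SU(2)=S\mathbb{H}$ by writing each entry in polar form $u_{j}=|u_{j}|e^{i\theta_{j}}$, extracting the conditions $|u_{1}|^{2}+|u_{3}|^{2}=1$, $|u_{2}|=|u_{3}|$, $|u_{1}|=|u_{4}|$, and a phase relation, and then running a case analysis on whether $u_{1}u_{3}\ne 0$ to pin down $e^{i(\theta_{1}+\theta_{4})}=1$ and $e^{i(\theta_{2}+\theta_{3})}=-1$; it then computes $\det U$ explicitly in these angle variables to exhibit the square root $\gamma_{3}$ (again splitting into the cases $u_{3}\ne 0$ and $u_{3}=0$). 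Your determinant--adjugate comparison, $M^{-1}=M^{*}$ versus $M^{-1}=\mathrm{adj}(M)$ when $\det M=1$, replaces all of that with a two-line identity and needs no case distinctions or polar coordinates, while your remark that any square root of $\det U$ will do sidesteps the paper's explicit angle bookkeeping. What the paper's longer route buys is the explicit formulas $\gamma_{3}=e^{i(\theta_{2}+\theta_{3}+\pi)/2}$ (resp. $e^{i(\theta_{1}+\theta_{4})/2}$) and $\gamma_{1},\gamma_{2}$ in terms of the entries of $U$, which the authors reuse later; your argument is the cleaner proof of the proposition as stated, but if you wanted those concrete expressions you would have to read them off afterwards from $\gamma_{1}=\gamma_{3}^{-1}u_{1}$, $\gamma_{2}=\gamma_{3}^{-1}u_{3}$. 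Your easy inclusion $U(1)\,S\mathbb{H}\subset U(2)$ is the part the paper leaves implicit (it is contained in its Lemma 5.2), so stating it explicitly, as you do, is harmless and slightly more complete.
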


\textsc{Remark}: \textit{For arbitrary coefficients, $c_{L}, c_{R}\in\mathbb{C}$, 
the wave function} 
\begin{equation}
\psi=\psi_{0}+c_{L}\psi_{L}^{+}+c_{R}\psi_{R}^{+}
+U(c_{L}\psi_{L}^{+}+c_{R}\psi_{R}^{+})
\label{eq:wias-1}
\end{equation} 
\textit{is in the domain $D(H_{\alpha})$, 
where $\psi_{0}\in D(H_{0})$, and $\psi_{\sharp}^{\pm}$ were defined 
in Eqs.(\ref{eq:+eigenfunctions}) and (\ref{eq:-eigenfunctions}). 
Taking $0$ as the coefficient $c_{R}$ (resp. $c_{L}$) 
in the case where $U$ is non-diagonal, 
we have $\psi=\psi_{0}+c_{L}(\psi_{L}^{+}+\gamma_{1}\gamma_{3}\psi_{L}^{-}
-\gamma_{2}^{*}\gamma_{3}\psi_{R}^{-})$ 
(resp. $\psi=\psi_{0}+c_{R}(\psi_{R}^{+}+\gamma_{1}^{*}\gamma_{3}\psi_{R}^{-}
+\gamma_{2}\gamma_{3}\psi_{L}^{-})$). 
We set 
$k(z):=\sqrt{z-m}\sqrt{z+m}$ for $z\in\mathbb{C}$, 
where $\sqrt{z}$ is the branch of the complex square root with the cut 
along the non-negative real axis $\mathbb{R}_{+}$. 
The function $k(\cdot)$ is analytic in $\mathbb{C}\setminus 
[-m\, ,m]$, $\Im k(z)\ge 0$ for $z\in\mathbb{C}_{+}$, 
and $\Im k(z)\le 0$ for $z\in\mathbb{C}_{-}$ \cite{BMN02}. 
Then, since $\sqrt{1+m^{2}}=\mp ik(\pm i)$ and 
$|k(\pm i)|=\sqrt{1+m^{2}}$, we have 
$e^{\sqrt{1+m^{2}}\, x}=e^{\mp ik(\pm i)x}$ and $e^{-\sqrt{1+m^{2}}\, x}=e^{\pm ik(\pm i)x}$. 
Thus, the entry $\gamma_{1}$ is concerned with 
the reflection, and the entry $\gamma_{2}$ with the penetration.}

Now our second theorem is the following:  
\begin{theorem}
\label{theo:2}
\begin{enumerate}
\item[i)] Every diagonal $U\in U(2)$ 
has the following representation: 
There are complex numbers $\gamma_{L}, \gamma_{R}\in\mathbb{C}$ 
so that 
$$
U=\left(
\begin{array}{cc}
\gamma_{L} & 0 \\ 
0 & \gamma_{R}
\end{array}
\right)
\,\,\, \textrm{with}\,\,\, |\gamma_{L}|=|\gamma_{R}|=1.
$$
Then, for arbitrarily fixed 
$\gamma_{L}$ and $\gamma_{R}$ satisfying $|\gamma_{L}|=|\gamma_{R}|=1$, 
a necessary and sufficient condition for 
$D(H_U) = D(H_{\rho})$ is given by determining 
the vector $\rho\in\overline{\mathbb{R}}^{2}$ 
with the formulae:
\begin{enumerate}
\item[(L1)] For $\gamma_{L} \neq -1$, 
$\rho_{-}=\left(\tan\frac{\theta_{L}}{2}-m\right)/\sqrt{1+m^{2}}$, 
where $\theta_{L}:=\arg\gamma_{L}\in [0 , 2\pi)$.
\item[(L2)] For $\gamma_{L}=\, -1$, $\rho_{-}=+\infty$. 
\item[(R1)] For $\gamma_{R} \neq -1$, 
$\rho_{+}=\, -\left(\tan\frac{\theta_{R}}{2}-m\right)/\sqrt{1+m^{2}}$, 
where $\theta_{R}:=\arg \gamma_{R}\in [0 , 2\pi)$.
\item[(R2)] For $\gamma_{R}=\, -1$, $\rho_{+}=+\infty$.
\end{enumerate}
\item[ii)] Let $\mu$ be a constant defined by 
$\mu:=(1+im)/\sqrt{1+m^{2}}\in\mathbb{C}$. 
Every non-diagonal $U \in U(2)$ 
has the following representation: 
There are complex numbers $\gamma_{1}, \gamma_{2}, \gamma_{3}\in\mathbb{C}$ 
so that   
$$
U=\gamma_{3}
\left(
\begin{array}{cc}
\gamma_{1} & -\gamma_{2}^{*} \\ 
\gamma_{2} & \gamma_{1}^{*}
\end{array}
\right)
\,\,\, \textrm{with}\,\,\,  
|\gamma_{1}|^{2}+|\gamma_{2}|^{2}
=|\gamma_{3}|= 1\,\,\, \textrm{and}\,\,\, \gamma_{2} \neq 0. 
$$
Then, for arbitrarily fixed 
$\gamma_{1}$, $\gamma_{2}$, and $\gamma_{3}$ 
satisfying $|\gamma_{1}|^{2}+|\gamma_{2}|^{2}
=|\gamma_{3}|= 1$ and $\gamma_{2} \neq 0$, 
a necessary and sufficient condition 
for $D(H_U) = D(H_{\alpha})$ 
is given by determining the vector $\alpha\in\mathbb{C}^{4}$ with 
the formulae: 
\begin{equation}
\left\{ \begin{array}{l}
\alpha_{1} 
= i\gamma_{2}^{-1}\sqrt{1+m^{2}}
\left(
\Im(\gamma_{1}^{*}\mu)+\Im(\gamma_{3}^{*}\mu)
\right), \\ 
\alpha_{2} 
= \gamma_{2}^{-1}\sqrt{1+m^{2}}
\left(
\Re\gamma_{1}+\Re\gamma_{3}
\right), \\
\alpha_{3} 
= \gamma_{2}^{-1}\sqrt{1+m^{2}}
\left(
-\Re\gamma_{1}+\Re(\gamma_{3}^{*}\mu^{2})
\right), \\ 
\alpha_{4} 
= i\gamma_{2}^{-1}\sqrt{1+m^{2}}
\left(
\Im(\gamma_{1}\mu)+\Im(\gamma_{3}^{*}\mu)
\right).
\end{array}\right.
\label{eq:TJF1}
\end{equation}
\end{enumerate}
\end{theorem}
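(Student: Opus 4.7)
The plan is to work directly with the explicit description $\psi=\psi_{0}+\psi^{+}+U\psi^{+}$ of a generic element of $D(H_{U})$ supplied by Proposition \ref{prop:von-Neumann}, read off its boundary values at $\pm\Lambda$, and match these against (\ref{BC-rho}) or (\ref{BC-alpha}). Writing $\psi^{+}=c_{L}\psi_{L}^{+}+c_{R}\psi_{R}^{+}$, two features collapse the computation immediately: first, $\psi_{0}\in\mathcal{AC}_{0}(\overline{\Omega_{\Lambda}})$ vanishes at $\pm\Lambda$; second, $\psi_{L}^{\pm}$ is supported on $\overline{\Omega_{\Lambda,L}}$ and $\psi_{R}^{\pm}$ on $\overline{\Omega_{\Lambda,R}}$, so only the left-island terms contribute to $\psi(-\Lambda)$ and only the right-island terms to $\psi(+\Lambda)$. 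I will also use throughout that $|\mu|=1$, i.e.\ $\mu=e^{i\phi}$ with $\cos\phi=1/\sqrt{1+m^{2}}$, $\sin\phi=m/\sqrt{1+m^{2}}$.

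For part (i), in the diagonal case $U=\mathrm{diag}(\gamma_{L},\gamma_{R})$ the evaluations yield
\begin{equation*}
\psi(-\Lambda)=\nu c_{L}\binom{1+\gamma_{L}}{-\mu+\gamma_{L}\mu^{*}},\qquad
\psi(+\Lambda)=\nu c_{R}\binom{1+\gamma_{R}}{\mu-\gamma_{R}\mu^{*}},
\end{equation*}
where $\nu:=Ne^{-\sqrt{1+m^{2}}\Lambda}$. The boundary condition (\ref{BC-rho}) decouples into independent left and right conditions. When $\gamma_{L}=-1$ the upper component vanishes for every $c_{L}$, which matches $\rho_{-}=+\infty$. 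When $\gamma_{L}\neq -1$, I solve $i\rho_{-}(1+\gamma_{L})=-\mu+\gamma_{L}\mu^{*}$; writing $\gamma_{L}=e^{i\theta_{L}}$ and factoring $e^{i\theta_{L}/2}$ from numerator and denominator reduces the ratio to $-\Im(\mu e^{-i\theta_{L}/2})/\cos(\theta_{L}/2)$, which by a half-angle identity equals $(\tan(\theta_{L}/2)-m)/\sqrt{1+m^{2}}$. The right-side formula (R1)--(R2) follows from the symmetric computation, the sign flip reflecting the opposite sign of $\mu$ appearing in $\psi_{R}^{\pm}$ versus $\psi_{L}^{\pm}$.

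For part (ii) I use the non-diagonal representation of Proposition \ref{prop:03'}, so that $U\psi_{L}^{+}=\gamma_{3}(\gamma_{1}\psi_{L}^{-}+\gamma_{2}\psi_{R}^{-})$ and $U\psi_{R}^{+}=\gamma_{3}(-\gamma_{2}^{*}\psi_{L}^{-}+\gamma_{1}^{*}\psi_{R}^{-})$. Evaluating at the endpoints gives matrix identities
\begin{equation*}
\psi(-\Lambda)=\nu\,M_{L}\binom{c_{L}}{c_{R}},\qquad
\psi(+\Lambda)=\nu\,M_{R}\binom{c_{L}}{c_{R}},
\end{equation*}
for explicit $2\times 2$ matrices $M_{L},M_{R}$ whose entries are built from $1,\mu,\mu^{*}$ multiplied by $\gamma_{1}\gamma_{3},\gamma_{1}^{*}\gamma_{3},\gamma_{2}\gamma_{3},\gamma_{2}^{*}\gamma_{3}$. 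Because $c_{L},c_{R}$ are arbitrary, (\ref{BC-alpha}) is equivalent to the matrix identity $M_{R}=B_{\alpha}M_{L}$, and hence to $B_{\alpha}=M_{R}M_{L}^{-1}$ provided $M_{L}$ is invertible. A direct expansion shows $\det M_{L}$ is a nonzero scalar multiple of $\gamma_{2}$, so the hypothesis $\gamma_{2}\neq 0$ (which is precisely what makes $U$ non-diagonal) is exactly what legitimizes the inversion.

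The remaining work is to expand $M_{R}M_{L}^{-1}$ entry by entry, group real and imaginary parts, and use $|\gamma_{1}|^{2}+|\gamma_{2}|^{2}=|\gamma_{3}|=|\mu|=1$ together with elementary identities such as $\mu+\mu^{*}=2\Re\mu$ and $\mu\mu^{*}=1$ to collapse the four entries into the symmetric form (\ref{eq:TJF1}); a short parallel check will verify that the resulting $\alpha$ satisfies the class conditions (\ref{Cl-alpha}), confirming that $B_{\alpha}$ is indeed a boundary matrix. Finally, the inclusion $D(H_{U})\subseteq D(H_{\alpha})$ established by this matrix identity upgrades to equality by a dimension count: Proposition \ref{prop:von-Neumann} gives $\dim(D(H_{U})/D(H_{0}))=2$, and the two complex constraints in (\ref{BC-alpha}) cut the four-dimensional quotient $D(H_{0}^{*})/D(H_{0})$ down to the same dimension. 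The main obstacle I anticipate is precisely this last bookkeeping: the raw entries of $M_{R}M_{L}^{-1}$ are unwieldy, and recognizing that they consolidate into the compact $\Re$/$\Im$ combinations of (\ref{eq:TJF1}) requires patient and repeated use of the unitarity relations on the $\gamma_{i}$'s.
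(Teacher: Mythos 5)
Your proposal is correct and follows essentially the same route as the paper: expand a generic element of $D(H_U)$ via Proposition \ref{prop:von-Neumann}, evaluate at $\pm\Lambda$ (the diagonal case reducing to the ratio $\psi_{\downarrow}(\mp\Lambda)/\psi_{\uparrow}(\mp\Lambda)$ and a half-angle identity, the non-diagonal case to $B_{\alpha}=M_{R}M_{L}^{-1}$ with $\det M_{L}$ a nonzero multiple of $\gamma_{2}$), and verify the class (\ref{Cl-alpha}). The only cosmetic difference is your closing dimension count $\dim\bigl(D(H_{0}^{*})/D(H_{0})\bigr)=4$ to upgrade the inclusion $D(H_U)\subseteq D(H_{\alpha})$ to equality, where the paper instead invokes the arbitrariness of $c_{L},c_{R}$ (equivalently, maximality of self-adjoint restrictions of $H_{0}^{*}$); both are sound, and the remaining entrywise expansion you defer is exactly the computation carried out in \S\ref{subsec:proof-theo-2}.
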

The proof of this theorem will appear in \S\ref{subsec:proof-theo-2}.

Since Proposition \ref{prop:von-Neumann} 
says that unitary operators $U\in U(2)$ determine 
all the self-adjoint extensions of the minimal 
Dirac operator, 
Theorem \ref{theo:2} gives the complete classification 
with the boundary conditions: 
\begin{corollary}
\label{cor:theo-2} 
The boundary conditions of all the self-adjoint extensions 
of the minimal Dirac operator $H_{0}$ 
can be classified under either one of 
the boundary conditions, (\ref{BC-rho}) and 
(\ref{BC-alpha}).
\end{corollary}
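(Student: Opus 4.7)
The plan is to combine von Neumann's theorem (Proposition \ref{prop:von-Neumann}) with Theorem \ref{theo:2} through the tautological dichotomy ``diagonal versus non-diagonal'' on $U(2)$. By Proposition \ref{prop:deficiency-indices} the deficiency indices are both equal to $2$, so Proposition \ref{prop:von-Neumann} ensures that every self-adjoint extension of $H_{0}$ arises as $H_{U}$ for a unique unitary $U:\mathcal{K}_{+}(H_{0})\to\mathcal{K}_{-}(H_{0})$, which we identify with an element of $U(2)$ via the ordered bases $\{\psi_{L}^{\pm},\psi_{R}^{\pm}\}$ fixed in Proposition \ref{prop:deficiency-indices}.

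First I would split into two exhaustive and mutually exclusive cases according to whether the matrix $U=(u_{k\ell})_{k,\ell=1,2}$ is diagonal or not. In the diagonal case, Theorem \ref{theo:2}(i) already tells us that the diagonal entries are of unit modulus, and it supplies an explicit vector $\rho\in\overline{\mathbb{R}}^{2}$ via formulae (L1)--(R2) such that $D(H_{U})=D(H_{\rho})$. Hence the wave functions in $D(H_{U})$ satisfy exactly the boundary condition (\ref{BC-rho}).

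In the non-diagonal case, Proposition \ref{prop:03'} lets me rewrite $U=\gamma_{3}\bigl(\begin{smallmatrix}\gamma_{1} & -\gamma_{2}^{*}\\ \gamma_{2} & \gamma_{1}^{*}\end{smallmatrix}\bigr)$ with $|\gamma_{1}|^{2}+|\gamma_{2}|^{2}=|\gamma_{3}|=1$, and the non-diagonality of $U$ forces $\gamma_{2}\ne 0$. Theorem \ref{theo:2}(ii) then produces a vector $\alpha\in\mathbb{C}^{4}$ in the class (\ref{Cl-alpha}) (via formulae (\ref{eq:TJF1})) such that $D(H_{U})=D(H_{\alpha})$, so the wave functions in $D(H_{U})$ obey (\ref{BC-alpha}). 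Since $U(2)$ is the disjoint union of its diagonal and non-diagonal elements, the two cases together cover every self-adjoint extension of $H_{0}$, and the corollary follows.

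There is no genuine obstacle: the corollary is a bookkeeping consequence that packages Theorem \ref{theo:2} together with the representation of $U(2)$ established in Proposition \ref{prop:03'}. The only subtlety worth flagging is that one must check that the correspondence $U\leftrightarrow\rho$ (resp. $U\leftrightarrow\alpha$) given by Theorem \ref{theo:2} actually covers \emph{every} admissible $\rho\in\overline{\mathbb{R}}^{2}$ (resp. every $\alpha$ satisfying (\ref{Cl-alpha})), so that no boundary condition of type (\ref{BC-rho}) or (\ref{BC-alpha}) is extraneous; but surjectivity of the formulae (L1)--(R2) and (\ref{eq:TJF1}) on their respective parameter spaces is visible by inspection.
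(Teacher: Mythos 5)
Your argument is correct and matches the paper's own reasoning: the corollary is obtained exactly as you describe, by noting via Propositions \ref{prop:von-Neumann} and \ref{prop:deficiency-indices} that every self-adjoint extension is $H_{U}$ for some $U\in U(2)$, and then applying Theorem \ref{theo:2} through the exhaustive diagonal/non-diagonal dichotomy (with Proposition \ref{prop:03'} supplying the representation in the non-diagonal case). The surjectivity point you flag is not actually needed for the statement as written, but raising it does no harm.
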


Theorem \ref{theo:2} gives the formulae showing how to construct 
the two parameters $\rho=(\rho_{+},\rho_{-})\in\overline{\mathbb{R}}^{2}$ 
(resp. the four parameters $\alpha=(\alpha_{1},\alpha_{2},\alpha_{3},\alpha_{4})
\in\mathbb{C}^{4}$) describing the boundary condition 
from the parameters, $(\gamma_{L},\gamma_{R})$ (resp. $(\gamma_{1},\gamma_{2},\gamma_{3})$), 
describing the unitary operator $U\in U(2)$ appearing in von Neumann's theory. 
We give the formulae conversely showing how to construct 
the parameters describing the unitary operator $U\in U(2)$ 
from the parameter family describing the boundary condition. 
 
Since Theorem \ref{theo:2} i) gives the one-to-one correspondence 
between the boundary condition (\ref{BC-rho}) and the parameters $(\gamma_{L},\gamma_{R})$ 
actually, we immediately have
\begin{enumerate}
\item[(L1')] $\gamma_{L}=\exp\left[ 2i\tan^{-1}\left( m+\sqrt{1+m^{2}}\rho_{-}\right)\right]$ 
if $\rho_{-}\in\mathbb{R}$, 
\item[(L2')] $\gamma_{L}=\, -1$ if $\rho_{-}=\infty$,  
\item[(R1')] $\gamma_{R}=\exp\left[ 2i\tan^{-1}\left( m-\sqrt{1+m^{2}}\rho_{+}\right)\right]$ 
if $\rho_{+}\in\mathbb{R}$, 
\item[(R2')] $\gamma_{R}=\, -1$ if $\rho_{+}=\infty$. 
\end{enumerate} 
The formulae for the other case are obtained 
using Propositions \ref{prop:03'} and \ref{prop:06}:
\begin{proposition}
\label{prop:inverse}
For every boundary matrix $B_{\alpha}$ 
with $\alpha\in\mathbb{C}^{4}$ in the class (\ref{Cl-alpha}), 
the corresponding non-diagonal $U\in U(2)=U(1)S\mathbb{H}$ 
is determined as: 
\begin{eqnarray}
\left\{ \begin{array}{l}
\gamma_{1}=\Gamma_{0}e^{-i(\theta-\pi/2)}
\left(
-\mu^{*}\alpha_{1}+\alpha_{2}-\alpha_{3}+\mu\alpha_{4}
\right), \\ 
\gamma_{2}=\frac{2}{\sqrt{1+m^{2}}}\Gamma_{0}e^{-i(\theta-\pi/2)}, \\ 
\gamma_{3}=\Gamma_{0}e^{-i(\theta-\pi/2)}\mu
\left(
\alpha_{1}+\mu^{*}\alpha_{2}+\mu\alpha_{3}+\alpha_{4}
\right)^{*},
\end{array}\right. 
\label{eq:TJF2}
\end{eqnarray}
where $\mu=(1+im)/\sqrt{1+m^{2}}$,  
$$
\Gamma_{0}=\left(
\frac{4}{1+m^{2}}
+|-\mu^{*}\alpha_{1}+\alpha_{2}-\alpha_{3}+\mu\alpha_{4}|^{2}
\right)^{-1/2},
$$
and $\theta$ is determined by following Proposition \ref{prop:06} as 
$\alpha_{j}=e^{i\theta}a_{j}$, $j=1, 4$, and 
$\alpha_{k}=ie^{i\theta}a_{k}$, $j=2, 3$.  
\end{proposition}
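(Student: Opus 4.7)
The plan is to invert the system (\ref{eq:TJF1}) of Theorem \ref{theo:2} ii) directly. Each formula in (\ref{eq:TJF1}) expresses $\gamma_{2}\alpha_{j}/\sqrt{1+m^{2}}$ as a real-linear combination of $\gamma_{1},\gamma_{1}^{*},\gamma_{3},\gamma_{3}^{*}$, so I look for two complex-linear combinations of $(\alpha_{1},\alpha_{2},\alpha_{3},\alpha_{4})$ whose $\gamma_{1}$-content and $\gamma_{3}$-content decouple. Guided by (\ref{eq:TJF1}) and the elementary identities $\mu\mu^{*}=1$, $\mu+\mu^{*}=2/\sqrt{1+m^{2}}$, $\mu-\mu^{*}=2im/\sqrt{1+m^{2}}$, and $\mu^{2}+\mu^{*2}=2(1-m^{2})/(1+m^{2})$, the natural candidates are $-\mu^{*}\alpha_{1}+\alpha_{2}-\alpha_{3}+\mu\alpha_{4}$ and $\alpha_{1}+\mu^{*}\alpha_{2}+\mu\alpha_{3}+\alpha_{4}$, which are exactly the combinations appearing in (\ref{eq:TJF2}).

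Substituting (\ref{eq:TJF1}) into each combination and rewriting $\Re z=(z+z^{*})/2$ and $i\Im z=(z-z^{*})/2$, the mechanical step is to collect the coefficients of $\gamma_{1},\gamma_{1}^{*},\gamma_{3},\gamma_{3}^{*}$ separately. In the first combination the $\gamma_{3}$ and $\gamma_{3}^{*}$ contributions cancel identically, the $\gamma_{1}^{*}$ coefficient cancels as $-1+1+1-1=0$, and the $\gamma_{1}$ coefficient collapses via $\mu^{2}+\mu^{*2}+2=(\mu+\mu^{*})^{2}=4/(1+m^{2})$ to $2/(1+m^{2})$; in the second, the $\gamma_{1}$, $\gamma_{1}^{*}$, and $\gamma_{3}$ contributions cancel and the $\gamma_{3}^{*}$ coefficient reduces, through the nontrivial simplification $2\mu+\mu^{*}+\mu^{3}=4\mu/(1+m^{2})$, to $2\mu/(1+m^{2})$. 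This yields the clean identities
\[
-\mu^{*}\alpha_{1}+\alpha_{2}-\alpha_{3}+\mu\alpha_{4}=\frac{2\gamma_{1}}{\gamma_{2}\sqrt{1+m^{2}}},\qquad \alpha_{1}+\mu^{*}\alpha_{2}+\mu\alpha_{3}+\alpha_{4}=\frac{2\mu\gamma_{3}^{*}}{\gamma_{2}\sqrt{1+m^{2}}},
\]
determining $\gamma_{1}$ and $\gamma_{3}$ up to the common scalar factor $\gamma_{2}\sqrt{1+m^{2}}/2$.

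To fix this scalar I would impose the $S\mathbb{H}$-normalization $|\gamma_{1}|^{2}+|\gamma_{2}|^{2}=1$ from Proposition \ref{prop:03'}. Substituting the first identity and solving for $|\gamma_{2}|$ gives
\[
|\gamma_{2}|^{2}\left(1+\tfrac{1+m^{2}}{4}\bigl|-\mu^{*}\alpha_{1}+\alpha_{2}-\alpha_{3}+\mu\alpha_{4}\bigr|^{2}\right)=1,
\]
equivalent to $|\gamma_{2}|=2\Gamma_{0}/\sqrt{1+m^{2}}$ with $\Gamma_{0}$ as in the statement. The remaining overall phase of $(\gamma_{1},\gamma_{2},\gamma_{3})$---which is not determined by $B_{\alpha}$ alone, since the product $\gamma_{3}\cdot S$ with $S\in S\mathbb{H}$ is invariant under $(\gamma_{1},\gamma_{2},\gamma_{3})\mapsto(e^{i\phi}\gamma_{1},e^{i\phi}\gamma_{2},e^{-i\phi}\gamma_{3})$---is read off from the relation $\alpha_{j}=e^{i\theta}a_{j}$, $\alpha_{k}=ie^{i\theta}a_{k}$ of Proposition \ref{prop:06} by matching against the Benvegn\`u--D\c{a}browski representative $a$; this supplies the factor $e^{-i(\theta-\pi/2)}$ in (\ref{eq:TJF2}).

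The main obstacle I anticipate is the two cancellation calculations in the second paragraph. The cross terms arising from $\mu\alpha_{3}$ involve both $\mu\mu^{*2}=\mu^{*}$ and $\mu\mu^{2}=\mu^{3}$, so the vanishing of the $\gamma_{3}^{(*)}$ part of the first combination occurs only because the coefficient pattern $(-\mu^{*},1,-1,\mu)$ forces the $\gamma_{3}^{*}$ contributions from $\alpha_{1},\alpha_{4}$ (namely $-1$ each) to cancel those from $\alpha_{2},-\alpha_{3}$ (namely $+1$ each), and likewise for $\gamma_{3}$; the reduction $2\mu+\mu^{*}+\mu^{3}=4\mu/(1+m^{2})$ on the $\gamma_{3}^{*}$ side is where the three mass-dependent pieces must conspire to produce the scalar multiple of $\mu$ visible in (\ref{eq:TJF2}). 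A careful bookkeeping of these cross terms is essential, but once the two displayed identities are in hand the remainder is algebra plus the invocation of Proposition \ref{prop:06}.
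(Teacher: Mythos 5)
Your two decoupling identities are correct: substituting (\ref{eq:TJF1}) and collecting the coefficients of $\gamma_{1},\gamma_{1}^{*},\gamma_{3},\gamma_{3}^{*}$ does give $-\mu^{*}\alpha_{1}+\alpha_{2}-\alpha_{3}+\mu\alpha_{4}=2\gamma_{1}/(\gamma_{2}\sqrt{1+m^{2}})$ and $\alpha_{1}+\mu^{*}\alpha_{2}+\mu\alpha_{3}+\alpha_{4}=2\mu\gamma_{3}^{*}/(\gamma_{2}\sqrt{1+m^{2}})$, and the determination $|\gamma_{2}|=2\Gamma_{0}/\sqrt{1+m^{2}}$ from $|\gamma_{1}|^{2}+|\gamma_{2}|^{2}=1$ is fine. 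Your route (inverting the already-proved forward map of Theorem \ref{theo:2} ii)) is also genuinely different from the paper's, which instead takes the triple (\ref{eq:TJF2}) as a definition, checks $|\gamma_{1}|^{2}+|\gamma_{2}|^{2}=|\gamma_{3}|=1$, and verifies $D(H_{U})=D(H_{\alpha})$ directly through the system (\ref{eq:inverse-2})--(\ref{eq:inverse-5}); to use your route you should also say in one line why the $U$ attached to $H_{\alpha}$ is non-diagonal (e.g.\ Lemma \ref{lem:02} ii) shows $D(H_{\alpha})$ cannot satisfy a condition of type (\ref{BC-rho}), so Theorem \ref{theo:2} ii) applies).

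The genuine gap is your phase-fixing step. The invariance you invoke is false: replacing $(\gamma_{1},\gamma_{2},\gamma_{3})$ by $(e^{i\phi}\gamma_{1},e^{i\phi}\gamma_{2},e^{-i\phi}\gamma_{3})$ multiplies the second column of $\gamma_{3}S$ by $e^{-2i\phi}$, so only $\phi=0,\pi$ leave $U$ unchanged; and since von Neumann's correspondence makes $U$ unique once $D(H_{U})=D(H_{\alpha})$ is imposed, $\arg\gamma_{2}$ \emph{is} determined by $B_{\alpha}$ up to an overall sign and cannot be ``read off by matching'' against the Benvegn\`{u}--D\c{a}browski representative. It must be extracted from (\ref{eq:TJF1}) itself, which is easy and should be included: (\ref{eq:TJF1}) shows $\gamma_{2}\alpha_{1},\gamma_{2}\alpha_{4}\in i\mathbb{R}$ and $\gamma_{2}\alpha_{2},\gamma_{2}\alpha_{3}\in\mathbb{R}$, so writing $\alpha_{1}=e^{i\theta}a_{1}$, $\alpha_{2}=ie^{i\theta}a_{2}$, etc.\ as in Proposition \ref{prop:06} (not all $a_{j}$ vanish because $a_{1}a_{4}+a_{2}a_{3}=1$) forces $e^{i\theta}\gamma_{2}\in i\mathbb{R}$, i.e.\ $\gamma_{2}=\pm|\gamma_{2}|e^{-i(\theta-\pi/2)}$, the residual sign being exactly the $U(1)S\mathbb{H}$ ambiguity. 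Finally, note that in your scheme the phase enters $\gamma_{3}$ through $\gamma_{2}^{*}$, since $\gamma_{3}=\tfrac{\sqrt{1+m^{2}}}{2}\,\gamma_{2}^{*}\mu\left(\alpha_{1}+\mu^{*}\alpha_{2}+\mu\alpha_{3}+\alpha_{4}\right)^{*}$, so a careful execution produces the factor $e^{+i(\theta-\pi/2)}$ in the $\gamma_{3}$ line (this is also what the paper's verification of (\ref{eq:inverse-3}) and (\ref{eq:inverse-5}) actually uses); your assertion that the matching supplies $e^{-i(\theta-\pi/2)}$ uniformly in all three formulae is a symptom of the unjustified phase step rather than an output of your computation.
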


We will prove this proposition in \S\ref{subsection:prop-inverse}. 

The following proposition says that our $\alpha\in\mathbb{C}^{4}$ 
in the class (\ref{Cl-alpha}) is equivalent to 
Benvegn\`{u} and D\c{a}browski's four-parameter family, 
which shows how a phase factor appears 
in the boundary matrix: 
\begin{proposition}
\label{prop:06}
Let $\mathcal{A}$ be the set of all boundary matrices $B_{\alpha}$ 
for vectors $\alpha=(\alpha_{1},\alpha_{2},\alpha_{3},\alpha_{4})
\in\mathbb{C}^{4}$ in the class (\ref{Cl-alpha}). 
Then, $\alpha_{1}\ne 0$ or $\alpha_{3}\ne 0$. 
So, set $\theta\in \left[\left. 0 , 2\pi\right)\right.$, and 
$a_{1}, a_{2}, a_{3}, a_{4}\in\mathbb{R}$ as 
$$
\left\{ \begin{array}{l}
\theta:= \arg(\alpha_{1}/|\alpha_{1}|); \\  
a_{1}:= |\alpha_{1}|,\,\,\, 
a_{2}:=\, -i(\alpha_{1}\alpha_{2}^{*})^{*}/|\alpha_{1}|,\,\,\,    
a_{3}:=\, -i(\alpha_{1}\alpha_{3}^{*})^{*}/|\alpha_{1}|,\,\,\, 
a_{4}:= (\alpha_{1}\alpha_{4}^{*})^{*}/|\alpha_{1}|,
\end{array}\right.
$$
if $\alpha_{1}\ne 0$, and 
$$
\left\{ \begin{array}{l}
\theta:= \arg(-i\alpha_{3}/|\alpha_{3}|); \\  
a_{1}:= i\alpha_{1}\alpha_{3}^{*}/|\alpha_{3}|,\,\,\, 
a_{2}:= \alpha_{2}\alpha_{3}^{*}/|\alpha_{3}|,\,\,\,    
a_{3}:= |\alpha_{3}|,\,\,\, 
a_{4}:= i(\alpha_{3}\alpha_{4}^{*})^{*}/|\alpha_{3}|,
\end{array}\right.
$$
if $\alpha_{1}=0$.  
Then, $\mathcal{A}$ has the following representation: 
$$
\mathcal{A}=
\Biggl\{ e^{i\theta}
\left(
\begin{array}{cc}
a_{1} & ia_{2} \\ 
ia_{3} & a_{4}
\end{array}
\right)
\, \Biggl| \, 
\theta \in [0, 2\pi),\, 
a_{j} \in \mathbb{R},\, 
j=1, 2, 3, 4,\,\,\, 
\textrm{with}\,\,\, 
a_{1}a_{4}+a_{2}a_{3}=1
\Biggl\}.
$$
\end{proposition}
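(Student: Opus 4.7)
The plan is to show that the conditions in (\ref{Cl-alpha}) are equivalent to the factorization $B_{\alpha}=e^{i\theta}\bigl(\begin{smallmatrix}a_{1}&ia_{2}\\ ia_{3}&a_{4}\end{smallmatrix}\bigr)$ with $a_{j}\in\mathbb{R}$ and $a_{1}a_{4}+a_{2}a_{3}=1$. First I would rule out $\alpha_{1}=\alpha_{3}=0$: this would force $\alpha_{1}\alpha_{4}^{*}+\alpha_{2}\alpha_{3}^{*}=0$, contradicting the third line of (\ref{Cl-alpha}). I then extract two auxiliary reality consequences from that same third line: subtracting the two equalities yields $\alpha_{2}\alpha_{3}^{*}=\alpha_{2}^{*}\alpha_{3}$, hence $\alpha_{2}\alpha_{3}^{*}\in\mathbb{R}$, and consequently $\alpha_{1}\alpha_{4}^{*}=1-\Re(\alpha_{2}\alpha_{3}^{*})\in\mathbb{R}$.

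For the case $\alpha_{1}\ne 0$, set $\theta:=\arg(\alpha_{1}/|\alpha_{1}|)$ so that $\alpha_{1}=e^{i\theta}|\alpha_{1}|$. The first-line conditions of (\ref{Cl-alpha}) combined with $\alpha_{1}^{*}=|\alpha_{1}|e^{-i\theta}$ show that $e^{-i\theta}\alpha_{2}$ and $e^{-i\theta}\alpha_{3}$ are purely imaginary, and the auxiliary reality of $\alpha_{1}\alpha_{4}^{*}$ shows $e^{-i\theta}\alpha_{4}\in\mathbb{R}$. Writing $e^{-i\theta}\alpha_{2}=ia_{2}$, $e^{-i\theta}\alpha_{3}=ia_{3}$, $e^{-i\theta}\alpha_{4}=a_{4}$, and $a_{1}:=|\alpha_{1}|$, a short computation verifies that these $a_{j}$ coincide with the explicit formulae stated in the proposition. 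The normalization $\alpha_{1}\alpha_{4}^{*}+\alpha_{2}\alpha_{3}^{*}=1$ then collapses to $a_{1}a_{4}+a_{2}a_{3}=1$ since $\alpha_{2}\alpha_{3}^{*}=(ia_{2}e^{i\theta})(-ia_{3}e^{-i\theta})=a_{2}a_{3}$, while the second-line conditions $\Re(\alpha_{2}\alpha_{4}^{*})=\Re(\alpha_{3}\alpha_{4}^{*})=0$ follow automatically from the factored form.

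For the case $\alpha_{1}=0$ (so $\alpha_{3}\ne 0$), I would use $\theta:=\arg(-i\alpha_{3}/|\alpha_{3}|)$; the extra factor $-i$ inside the argument absorbs the $i$ that must appear in the off-diagonal slot, so $e^{-i\theta}\alpha_{3}=i|\alpha_{3}|$ gives $a_{3}=|\alpha_{3}|$. The constraint $\alpha_{2}\alpha_{3}^{*}=1$ pins $\alpha_{2}$ down so that $e^{-i\theta}\alpha_{2}=i/|\alpha_{3}|$, whence $a_{2}=1/|\alpha_{3}|$; $\Re(\alpha_{3}\alpha_{4}^{*})=0$ gives $e^{-i\theta}\alpha_{4}\in\mathbb{R}$, and $\Re(\alpha_{2}\alpha_{4}^{*})=0$ is then automatic. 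The normalization reads $a_{1}a_{4}+a_{2}a_{3}=0+1=1$. The converse direction is immediate: starting from the matrix $e^{i\theta}\bigl(\begin{smallmatrix}a_{1}&ia_{2}\\ ia_{3}&a_{4}\end{smallmatrix}\bigr)$ with $a_{j}\in\mathbb{R}$ and $a_{1}a_{4}+a_{2}a_{3}=1$, every product $\alpha_{j}\alpha_{k}^{*}$ loses the phase $e^{i\theta}$, and the conditions of (\ref{Cl-alpha}) reduce to tautologies in the $a_{j}$.

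I do not anticipate a genuine obstacle; the proposition is fundamentally a repackaging of (\ref{Cl-alpha}) as the statement that $e^{-i\theta}B_{\alpha}$ exhibits the alternating real/imaginary entry pattern $\bigl(\begin{smallmatrix}a_{1}&ia_{2}\\ ia_{3}&a_{4}\end{smallmatrix}\bigr)$ with $a_{1}a_{4}+a_{2}a_{3}=1$. The only place requiring care is the bookkeeping in the $\alpha_{1}=0$ branch, where the built-in $-i$ inside the definition of $\theta$ must be tracked so that the formulae for the $a_{j}$ are reproduced exactly as stated.
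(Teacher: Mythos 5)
Your proof is correct and follows essentially the same route as the paper: rule out $\alpha_{1}=\alpha_{3}=0$ via the third line of (\ref{Cl-alpha}), deduce that $\alpha_{2}\alpha_{3}^{*}$ and hence $\alpha_{1}\alpha_{4}^{*}$ are real, factor out the phase $e^{i\theta}$ separately in the cases $\alpha_{1}\ne 0$ and $\alpha_{1}=0$ to recover the stated $a_{j}$, and note the easy converse inclusion. The only difference is cosmetic: you verify the inclusion $\mathcal{A}_{0}\subset\mathcal{A}$ explicitly, which the paper dismisses as evident.
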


\textsc{Remark}: \textit{The Benvegn\`{u} and D\c{a}browski's four-parameter family, 
consisting of $A, B, C, D \in\mathbb{R}$ and $\omega\in\mathbb{C}$, 
as in Eq.(15) of Ref.\cite{BD94} is given by the correspondence, 
$\omega=e^{i\theta}$, $A=a_{1}, B=a_{2}, C=\, -a_{3}$, and $D=a_{4}$.}

\textit{
Meanwhile, in the case of the Schr\"{o}dinger particle living in 
our configuration space $\Omega_{\Lambda}$, 
the boundary matrix $B_{\alpha}$ making the boundary condition, 
$$
\begin{pmatrix}
\psi(+\Lambda) \\ 
\psi'(+\Lambda)
\end{pmatrix}
=B_{\alpha}
\begin{pmatrix}
\psi(-\Lambda) \\ 
\psi'(-\Lambda)
\end{pmatrix}, 
$$ 
has the four parameters satisfying $\alpha_{1}\alpha_{3}^{*},  
\alpha_{2}\alpha_{4}^{*}\in\mathbb{R}$ and $\alpha_{1}\alpha_{4}-\alpha_{2}\alpha_{3}=1$, 
and moreover,   
the set $\mathcal{A}$ has the following representation: 
$$
\mathcal{A}=
\Biggl\{ e^{i\theta}
\left(
\begin{array}{cc}
a_{1} & a_{2} \\ 
a_{3} & a_{4}
\end{array}
\right)
\, \Biggl| \, 
\theta \in [0, 2\pi),\, 
a_{j} \in \mathbb{R},\, 
j=1, 2, 3, 4,\,\,\, 
\textrm{with}\,\,\, 
a_{1}a_{4}-a_{2}a_{3}=1
\Biggl\}.
$$
For more details, see Proposition 2.6 of Ref.\cite{HK13-S}.  
}

Thus, Proposition \ref{prop:06}, together with Eqs.(\ref{eq:TJF1}), says 
that the Benvegn\`{u} and D\c{a}browski's four-parameter family 
can actually characterized by three parameters coming from von Neumann's theory: 
\begin{corollary}
\label{cor:BD->3}
The Benvegn\`{u} and D\c{a}browski's four-parameter family, 
consisting of $A, B, C, D \in\mathbb{R}$ and $\omega\in\mathbb{C}$, 
is characterized by three parameters, $\gamma_{1}, \gamma_{2}, \gamma_{3}\in\mathbb{C}$ 
with $|\gamma_{1}|^{2}+|\gamma_{2}|^{2}=|\gamma_{3}|=1$ 
and $\gamma_{2}\ne 0$. 
\end{corollary}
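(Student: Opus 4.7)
The plan is to present this as a direct consequence of chaining three preceding results: Proposition~\ref{prop:06}, Theorem~\ref{theo:2}(ii), and Proposition~\ref{prop:inverse}. No new analytic machinery is required; the work has been done in the earlier statements.

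First I would invoke the Remark following Proposition~\ref{prop:06} to translate the Benvegn\`{u}--D\c{a}browski quintuple $(A,B,C,D,\omega)$ into a vector $\alpha\in\mathbb{C}^{4}$ in the class (\ref{Cl-alpha}), via $\omega=e^{i\theta}$, $A=a_{1}$, $B=a_{2}$, $C=-a_{3}$, $D=a_{4}$. This identifies the BD parameter space with the set $\mathcal{A}$ of boundary matrices introduced in Proposition~\ref{prop:06}, so that proving the corollary reduces to bijecting $\mathcal{A}$ with the triples $(\gamma_{1},\gamma_{2},\gamma_{3})$ with $|\gamma_{1}|^{2}+|\gamma_{2}|^{2}=|\gamma_{3}|=1$ and $\gamma_{2}\ne 0$.

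Next I would read off that bijection from the two explicit formulae already stated: Theorem~\ref{theo:2}(ii) supplies (\ref{eq:TJF1}), sending each admissible triple $(\gamma_{1},\gamma_{2},\gamma_{3})$ to a vector $\alpha$ in the class (\ref{Cl-alpha}); and Proposition~\ref{prop:inverse} supplies (\ref{eq:TJF2}), sending each $\alpha\in\mathcal{A}$ back to such a triple. Combining these with the identification above completes the proof.

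The only genuinely non-formal point, and hence the main obstacle, is confirming that (\ref{eq:TJF1}) and (\ref{eq:TJF2}) are mutually inverse. I would handle this not by a direct substitution (which is tedious), but by appealing to the geometry already established: both Theorem~\ref{theo:2}(ii) and Proposition~\ref{prop:inverse} are statements about the single correspondence $U\leftrightarrow D(H_{U})=D(H_{\alpha})$ restricted to the non-diagonal stratum of $U(2)$, which by Proposition~\ref{prop:03'} is precisely the stratum where $\gamma_{2}\ne 0$. Since this correspondence is already known to be one-to-one by Proposition~\ref{prop:von-Neumann} together with Theorem~\ref{theo:2}(ii), the two formulae must invert each other. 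This yields the bijection between the BD four-parameter family and the three-parameter data $(\gamma_{1},\gamma_{2},\gamma_{3})$, and Corollary~\ref{cor:BD->3} follows with no further computation.
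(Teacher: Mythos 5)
Your proposal is correct and follows essentially the same route as the paper, which presents Corollary \ref{cor:BD->3} as an immediate consequence of Proposition \ref{prop:06} together with the correspondence formulae (\ref{eq:TJF1}) of Theorem \ref{theo:2} ii) and (\ref{eq:TJF2}) of Proposition \ref{prop:inverse}, exactly the chain you invoke. The only caveat is that the representation of Proposition \ref{prop:03'} determines $(\gamma_{1},\gamma_{2},\gamma_{3})$ only up to an overall sign (the triple and its negative give the same $U$), so the two formulae invert each other only modulo this identification; this does not affect the claimed characterization by three parameters.
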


\section{Mathematical Idea of Tunnel-Junction Device for Spintronic Qubit}

In this section, we propose a mathematical idea for 
a tunnel-junction device for spintronic qubit. 
Of course, since we derive mathematically-theoretical possible mechanism 
from our simple toy model, 
we are not sure that the idea can be experimentally demonstrated. 
Even this toy model, however, tells us that 
we have to mind the effect of a phase coming from the boundary. 
We can see such an effect in the Andreev(-like) effects in 
more realistic cases Refs.\cite{A64,A65,TOD08}. 
Conversely, we may use the phase effect for a device. 
We are interested in the unit of a quantum device, 
consisting of a junction and two quantum wires 
such as in Fig.\ref{fig:unit} 
from the point of the view of quantum engineering. 
\begin{figure}[htbp]
\begin{center}
 \begin{minipage}{0.8\hsize}
  \begin{center}
   \includegraphics[width=60mm]{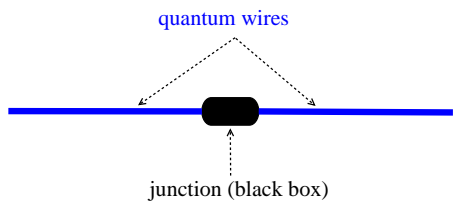}
  \end{center}
  \caption{\scriptsize 
The unit of our quantum device consists of 
the two quantum wires (blue solid lines)  
and the junction as the black box.}
  \label{fig:unit}
 \end{minipage}
\end{center}
\end{figure}
The combination of these units makes a quantum network. 
The junction is for controlling 
the information of qubit. 
The wires play a role of transporting the information.

We suppose that the energy of our unit has the Hamiltonian, 
$H_{\mathrm{wires}}+H_{\mathrm{junction}}+H_{\mathrm{interaction}}$, 
where $H_{\mathrm{wires}}$ is the Hamiltonian 
for the single electron living in the two wires, 
$H_{\mathrm{junction}}$ the Hamiltonian 
for the electron in the junction consisting of a physical object 
such as a quantum dot, and $H_{\mathrm{interaction}}$ describes 
the interaction between the wires and the junction. 
The Hamiltonians $H_{\mathrm{wires}}$ and $H_{\mathrm{junction}}$ 
should be observables in physics, and therefore, 
self-adjoint operators in mathematics then. 
We actually have to determine a concrete physical object for the junction 
to complete and realize our unit in the quantum engineering. 
But, in this paper, we regarded the junction as a black box 
so that the junction has 
mathematical, physical arbitrariness. 
Thus, we handled the Hamiltonian $H_{\mathrm{wires}}$ only, 
but we adopted proper boundary condition between the two wires 
and the junction instead of considering 
the Hamiltonian $H_{\mathrm{junction}}$ 
and the interaction $H_{\mathrm{interaction}}$ 
so that the Hamiltonian $H_{\mathrm{wires}}$ becomes 
observable, i.e., self-adjoint. 
The self-adjointness of the Hamiltonian $H_{\mathrm{wires}}$ 
is mathematically determined by a boundary condition 
of the wave functions on which the Hamiltonian $H_{\mathrm{wires}}$ acts. 
In addition to this, 
the boundary condition is uniquely determined by 
the quality and the shape of the boundary of a material 
of the wires in real physics.  
Thus, the wave functions have to satisfy the unit's own 
specific boundary condition 
to become the residents of the unit, 
otherwise they are ejected. 

Our one-to-one correspondence formulae show 
how the Benvegn\`{u} and D\c{a}browski's four-parameter family 
are concretely determined. 
Their four-parameter family shows 
how the phase factor appears and how the electron spin is affected at the boundary. 
Since von Neumann's theory gives the form of the wave functions, 
we can grasp how they pass through the junction. 
On the other hand, the boundary which is not characterized by 
the Benvegn\`{u} and D\c{a}browski's four-parameter family is 
the case where the wave functions never infiltrate the junction, 
which is characterized by the two parameters. 
As is well known, this case is also important, for instance, 
to demonstrate the Aharonov-Bohm effect experimentally \cite{Tonomura1,Tonomura2}. 
Thus, through our observation along with Benvegn\`{u} and D\c{a}browski's result, 
we understood that, for the wave functions which pass through the junction, 
the boundary condition has its own relation between 
the phase factor and the electron spin. 
The results may suggest a mathematical possibility of making 
a device for switching the channel of qubit. 
There is a case of the two units in Fig.\ref{fig:units}: 
\begin{figure}[htbp]
\begin{center}
 \begin{minipage}{0.8\hsize}
  \begin{center}
   \includegraphics[width=60mm]{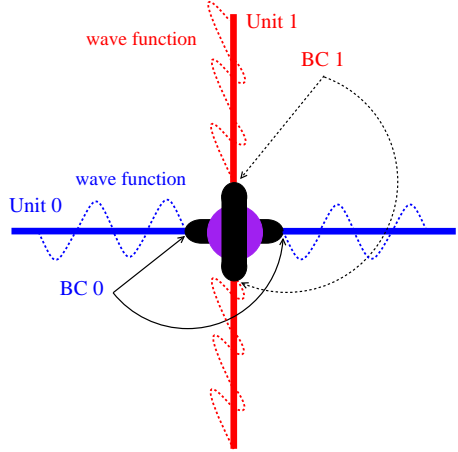}
  \end{center}
\vspace{-5mm}
  \caption{\scriptsize 
Qubit switch between the two units: 
The electron wave functions living in quantum wires (blue transversus 
solid lines) of Unit0 has the boundary condition BC0. 
If we can change the boundary condition from BC0 to another boundary 
condition BC1 in a gate (purple disc) connected the junctions, 
then the electron wave functions have to be residents 
in the quantum wires (red longitudinal solid lines) of Unit1.}
  \label{fig:units}
 \end{minipage}
\end{center}
\end{figure}
The Unit{\,}0 accepts the wave functions with 
the boundary condition BC{\,}0 only, 
and refuses the wave functions with another boundary condition. 
On the other hand, the Unit{\,}1 welcomes the wave functions with 
the boundary condition BC{\,}1 different from BC{\,}0, 
though it rejects the wave functions with the boundary condition BC{\,}0.  
For instance, as shown in Eqs.(\ref{eq:spin-flip}) 
and (\ref{eq:no-spin-flip}) below, 
we can use Unit{\,}0 for the channel without spin-flip 
and Unit{\,}1 for the channel with spin-flip 
as well as we can use the units for channels 
for the phase-shifted qubit. 
Thus, regarding Unit{\,}0 and Unit{\,}1 as a qubit, 
our switching device in Fig.\ref{fig:units} may play a role of 
\textit{quantum state transfer} from spintronic qubit. 
Here we should remember the experimental demonstration of 
the flying qubit \cite{UniTokyo12}, which is realized by 
the presence of an electron in either channel of the wire 
of an Aharonov-Bohm ring.

Theorem \ref{theo:2} shows the correspondence:  
\begin{eqnarray*} 
\textrm{$U\in U(2)$ is diagonal}\,\,\, 
&{\longleftrightarrow}&\,\,\, 
\textrm{(\ref{BC-rho})} \\ 
\textrm{$U\in U(2)$ is non-diagonal}\,\,\, 
&{\longleftrightarrow}&\,\,\,  
\textrm{(\ref{BC-alpha})}.
\qquad\qquad
\end{eqnarray*}
Theorem \ref{theo:2} assures us that \textit{there is no boundary condition 
which makes a self-adjoint extension but conditions, 
(\ref{BC-rho}) and (\ref{BC-alpha})}. 
These two conditions make the broad difference: 
The solitariness in the boundary condition (\ref{BC-rho}),
$$
(\textrm{left island})\qquad  
\left\{ \begin{array}{l}
i\rho_-\psi_{\uparrow}(-\Lambda)
=\psi_{\downarrow}(-\Lambda) 
\,\,\, \textrm{if\,\,\, $\rho_{-} \in \mathbb{R}$}, \\
\psi_{\uparrow}(-\Lambda)=0
\,\,\, \textrm{if\,\,\, $\rho_{-}=\infty$},
\end{array}\right.
$$ 
and 
$$ 
(\textrm{right island})\qquad  
\left\{ \begin{array}{l}
i\rho_{+}\psi_{\uparrow}(+\Lambda)
=\psi_{\downarrow}(+\Lambda) 
\,\,\, \textrm{if\,\,\, $\rho_{+} \in \mathbb{R}$}, \\
\psi_{\uparrow}(+\Lambda)=0 
\,\,\, \textrm{if\,\,\, $\rho_{+}=\infty$}.   
\end{array}\right. 
$$
Both of boundary conditions in the left island 
and the right one are independent of each other, 
which says that 
\textit{there is no interchange 
between the wave functions living in 
the left island $\Omega_{\Lambda,L}$ 
and those living in 
the right island $\Omega_{\Lambda,R}$} 
because the information of the wave functions 
never infiltrates the junction. 
In addition, \textit{no special phase factor but $\pm\pi/2$ 
appears in this boundary condition then}. 

On the other hand, according to Eq.(15) of Ref.\cite{BD94} described by 
the Benvegn\`{u} and D\c{a}browski's four-parameter family 
with the representation Proposition \ref{prop:06}, 
the boundary condition (\ref{BC-alpha}) 
shows \textit{how the wave functions living in the left island and 
the those living in the right island make 
interchange between each other}, 
and \textit{how the electron spin is affected by the phase 
factor at the boundaries}: 
\begin{equation}
\left(
\begin{array}{c}
\psi_{\uparrow}(+\Lambda) \\ 
\psi_{\downarrow}(+\Lambda) \\ 
\end{array}
\right) 
=
\left(
\begin{array}{c}
e^{i\theta}a_{1}\psi_{\uparrow}(-\Lambda)
+e^{i(\theta+\pi/2)}a_{2}\psi_{\downarrow}(-\Lambda) \\ 
e^{i(\theta+\pi/2)}a_{3}\psi_{\uparrow}(-\Lambda)
+e^{i\theta}a_{4}\psi_{\downarrow}(-\Lambda) 
\end{array}
\right)
\label{eq:TJF}
\end{equation}
for some $a_{j}\in\mathbb{R}$, 
$j= 1, \cdots, 4$, with 
$a_{1}a_{4}+a_{2}a_{3}=1$. 
The wave function (\ref{eq:wias-1}) determines 
four parameters $\alpha=(\alpha_{1},\alpha_{2},\alpha_{3},\alpha_{4})\in\mathbb{C}^{4}$ 
through Eq.(\ref{eq:TJF1}). 
If $a_{j}\in\mathbb{R}$, $j=1, \cdots, 4$, 
satisfy $a_{1}=a_{4}=0$ and $a_{2}a_{3}=1$, we obtain the boundary condition 
so that the \textit{spin-flip} with the \textit{phase factor} $e^{i(\theta+\pi/2)}$ 
for an arbitrary $\theta\in [0,2\pi)$ takes place, 
that is, the up-spin and the down-spin interchange 
with each other: 
\begin{equation}
\left(
\begin{array}{c}
\psi_{\uparrow}(+\Lambda) \\ 
\psi_{\downarrow}(+\Lambda) \\ 
\end{array}
\right) 
=e^{i(\theta+\pi/2)}
\left(
\begin{array}{c}
a_{2}\psi_{\downarrow}(-\Lambda) \\ 
a_{3}\psi_{\uparrow}(-\Lambda)
\end{array}
\right).
\label{eq:spin-flip}
\end{equation}
Meanwhile, if $a_{j}\in\mathbb{R}$, $j=1, \cdots, 4$, 
satisfy $a_{1}a_{4}=1$ and $a_{2}=a_{3}=0$, we obtain the boundary condition 
so that the phase factor $e^{i\theta}$ appears 
for an arbitrary $\theta\in [0,2\pi)$ 
but the spin-flip does not take place: 
\begin{equation}
\left(
\begin{array}{c}
\psi_{\uparrow}(+\Lambda) \\ 
\psi_{\downarrow}(+\Lambda) \\ 
\end{array}
\right) 
=e^{i\theta}
\left(
\begin{array}{c}
a_{1}\psi_{\uparrow}(-\Lambda) \\ 
a_{4}\psi_{\downarrow}(-\Lambda)
\end{array}
\right).
\label{eq:no-spin-flip}
\end{equation}

In Fig.\ref{fig:units}, for instance, 
let us employ Eq.(\ref{eq:no-spin-flip}) 
with $a_{1}=a_{4}=1$ and $\theta=0$ for Unit{\,}0, 
and Eq.(\ref{eq:spin-flip}) with $a_{2}=a_{3}=1$ 
and $\theta=\, -\pi/2$ for Unit{\,}1, respectively. 
We set the Pauli-X gate in the disc of junctions. 
Then, the residence of the wave functions living in Unit{\,}0 
is switched to Unit{\,}1 after the Pauli-X gate operation. 
That is, we have a spin-based switching device for qubit. 
Thus, there is a possibility that we can use this switching device 
for quantum state transfer from spintronic qubit 
regarding Unit{\,}0 and Unit{\,}1 as qubit. 
In the same way, if we employ Eq.(\ref{eq:no-spin-flip}) with 
$a_{1}=a_{4}=1$ and $0<\theta<2\pi$ for Unit{\,}1 instead, 
we can make a phase-based switching device for qubit. 
This means that we may control the qubit consisting of Unit{\,}0 and 
Unit{\,}1 through the phase factor $\theta$. 
We note that both the spin-flip gate operation 
and the phase-shift gate operation 
had been demonstrated in the experiment 
for the spin state of an electron-hole 
pair in a semiconductor quantum dot \cite{GBPSSPFRS10}.

\section{Proofs of Main Results} 

We will give individual proofs of our main results.

\subsection{Proof of Proposition \ref{prop:deficiency-indices}}
\label{subsec:proof-deficiency-indices}

We now prove Proposition \ref{prop:deficiency-indices}. 
Let $\psi = {^t}(\psi_{\uparrow},\psi_{\downarrow})$ be in 
the deficiency subspace $\mathcal{K}_{\pm}(H_0)$, i.e., 
$\psi\in\mathcal{K}_{+}(H_0)$ or $\psi\in\mathcal{K}_{-}(H_0)$.  
Then, since $H_0^{*}\psi=\pm i\psi$, 
Proposition \ref{prop:adjoint-operator} gives us 
the following differential equation: 
\begin{equation}
\left(
\begin{array}{c}
\psi_{\uparrow}{\,}' \\
\psi_{\downarrow}{\,}' 
\end{array}
\right)
=
\left(
\begin{array}{cc}
0 & (\mp 1 + im) \\ 
(\mp 1 - im) & 0 
\end{array}
\right)
\left(
\begin{array}{c}
\psi_{\uparrow} \\
\psi_{\downarrow} 
\end{array}
\right). 
\label{eq:proof-5-1-1}
\end{equation}
We note that $\mathcal{K}_{\pm}(H_{0})\subset 
\mathcal{AC}(\overline{\Omega_{\Lambda}})$. 
So, according to the general theory 
of differential equation, 
every solution $\psi$ 
of Eq.(\ref{eq:proof-5-1-1}) in 
$\mathcal{AC}(\overline{\Omega_{\Lambda}})$ 
is respectively written as
\begin{equation}
\left\{ \begin{array}{l}
\psi=c_{L}^{+}\psi_{L}^{+}+c_{R}^{+}\psi_{R}^{+},\,\,\, 
c_{L}^{+}, c_{R}^{+}\in\mathbb{C},\,\,\,  
 \textrm{if $\psi\in\mathcal{K}_{+}(H_{0})$,} \\ 
\psi=c_{L}^{-}\psi_{L}^{-}+c_{R}^{-}\psi_{R}^{-},\,\,\, 
c_{L}^{-}, c_{R}^{-}\in\mathbb{C},\,\,\,  
 \textrm{if $\psi\in\mathcal{K}_{-}(H_{0})$.}
\end{array}\right.
\label{eq:solution}
\end{equation} 
It follows from this representation that $n_{\pm}(H_{0})=2$ 
because the functions $\psi_{L}^{\sharp}$ and $\psi_{R}^{\sharp}$ 
mutually intersect orthogonally 
in the Hilbert space $L^{2}(\Omega_{\Lambda})$. 
The existence of self-adjoint extensions 
follows from Proposition \ref{prop:von-Neumann}.  

\subsection{Proof of Theorem \ref{theo:1}} 
\label{proof-theo-1}

To prove Theorem \ref{theo:1} 
we prepare the following lemma here. 

\begin{lemma}
\label{lem:02}
\begin{enumerate}
\item[i)] Let $a_{1}, a_{2}, b_{1}, b_{2}$ be arbitrary complex numbers.  
\begin{enumerate}
\item[(i-1)] For every $\rho=(\rho_{+},\rho_{-})$ 
with $|\rho_{\pm}|<\infty$, 
there is a wave function $\psi \in D(H_\rho)$ 
so that $\psi_{\uparrow}(+\Lambda)=a_{1}$ and 
$\psi_{\uparrow}(-\Lambda)=a_{2}$. 
\item[(i-2)] For every $\rho=(\rho_{+},\rho_{-})$ 
with $|\rho_{+}|<\infty$ and $\rho_{-}=+\infty$, 
there is a wave function $\psi \in D(H_\rho)$ 
so that $\psi_{\uparrow}(+\Lambda)=a_{1}$ and 
$\psi_{\downarrow}(-\Lambda)=b_{1}$. 
\item[(i-3)] For every $\rho=(\rho_{+},\rho_{-})$ 
with $\rho_{+}=+\infty$ and $|\rho_{-}|<\infty$, 
there is a wave function $\psi \in D(H_\rho)$ 
so that $\psi_{\uparrow}(-\Lambda)=a_{2}$ and 
$\psi_{\downarrow}(+\Lambda)=b_{2}$. 
\item[(i-4)] For every $\rho=(\rho_{+},\rho_{-})$ 
with $\rho_{\pm}=+\infty$, 
there is a wave function $\psi \in D(H_\rho)$ 
so that $\psi_{\downarrow}(-\Lambda)=b_{1}$ and 
$\psi_{\downarrow}(+\Lambda)=b_{2}$. 
\end{enumerate}  
\item[ii)] For arbitrary vector ${{}^{t}}(a_{1}, a_{2})
\in\mathbb{C}^{2}$, there is a wave function $\psi \in D(H_\alpha)$ 
so that ${{}^{t}}(\psi_{\uparrow}(-\Lambda) , \psi_{\downarrow}(-\Lambda))
={{}^{t}}(a_{1}, a_{2})$. 
\end{enumerate}
\end{lemma}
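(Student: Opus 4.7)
The plan is to realize each prescribed boundary datum by an explicit two-component ansatz built from a fixed smooth bump on each island, and then verify the boundary conditions by direct substitution. Fix once and for all any $\eta\in C^{\infty}(\mathbb{R})$ with $\eta(0)=1$ and with $\eta,\eta'$ belonging to $L^{2}(\mathbb{R})$ and tending to $0$ at $\pm\infty$; the Gaussian $\eta(t)=e^{-t^{2}}$ will do. For $c_{+},c_{-}\in\mathbb{C}$ define
\[
f_{c_{+},c_{-}}(x)\ :=\ c_{+}\,\eta(x-\Lambda)\,\chi_{R}(x)\ +\ c_{-}\,\eta(x+\Lambda)\,\chi_{L}(x),
\qquad x\in\Omega_{\Lambda}.
\]
Then $f_{c_{+},c_{-}}\in AC^{1}(\overline{\Omega_{\Lambda}})$, it takes the prescribed boundary values $f_{c_{+},c_{-}}(\pm\Lambda)=c_{\pm}$, and both it and its derivative vanish at $\pm\infty$. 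Every ansatz below will be of the form $\psi={}^{t}(f_{c_{+},c_{-}},\,f_{d_{+},d_{-}})$, and any such $\psi$ automatically lies in $\mathcal{AC}(\overline{\Omega_{\Lambda}})=D(H_{0}^{*})$.

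For part i), in every sub-case the boundary condition (\ref{BC-rho}) either determines the down-spin value at an endpoint from the up-spin value there or forces $\psi_{\uparrow}$ to vanish at that endpoint; in no case does it couple the two endpoints, so the two islands decouple completely. After reading off what is prescribed by the hypothesis and what is thereby forced, I just fit the coefficients: in (i-1) take $\psi_{\uparrow}:=f_{a_{1},a_{2}}$ and $\psi_{\downarrow}:=f_{i\rho_{+}a_{1},\,i\rho_{-}a_{2}}$; in (i-2) take $\psi_{\uparrow}:=f_{a_{1},0}$ and $\psi_{\downarrow}:=f_{i\rho_{+}a_{1},\,b_{1}}$; in (i-3) take $\psi_{\uparrow}:=f_{0,a_{2}}$ and $\psi_{\downarrow}:=f_{b_{2},\,i\rho_{-}a_{2}}$; in (i-4) take $\psi_{\uparrow}:=f_{0,0}$ and $\psi_{\downarrow}:=f_{b_{2},b_{1}}$. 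In every case the four endpoint values agree with the hypothesis and with (\ref{BC-rho}) by direct evaluation at $\pm\Lambda$.

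For part ii), the condition (\ref{BC-alpha}) is no longer local to a single island but simply asserts $\psi(+\Lambda)=B_{\alpha}\psi(-\Lambda)$; given the prescribed ${}^{t}(a_{1},a_{2})$, I set ${}^{t}(b_{1},b_{2}):=B_{\alpha}\,{}^{t}(a_{1},a_{2})$ and take $\psi_{\uparrow}:=f_{b_{1},a_{1}}$, $\psi_{\downarrow}:=f_{b_{2},a_{2}}$. Then $\psi(-\Lambda)={}^{t}(a_{1},a_{2})$ by construction, $\psi(+\Lambda)={}^{t}(b_{1},b_{2})=B_{\alpha}\psi(-\Lambda)$ holds automatically, and so $\psi\in D(H_{\alpha})$. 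I do not foresee any genuine obstacle: the whole lemma reduces to the bookkeeping of which boundary component is free data and which is forced, together with the elementary fact that one can build an $AC^{1}$ function on a half-line with prescribed endpoint value and decay at infinity. The only mild nuisance is the case-by-case accounting in part i), which is handled uniformly by the template $f_{c_{+},c_{-}}$ above.
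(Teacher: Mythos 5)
Your construction is correct and is essentially the paper's own argument: in both cases one builds the wave function island-by-island from functions with prescribed endpoint values (the paper uses abstract $f,g,h,k\in AC^{1}$ with nonvanishing boundary values, you use a concrete Gaussian template), matching the free data and enforcing (\ref{BC-rho}) resp. $\psi(+\Lambda)=B_{\alpha}\psi(-\Lambda)$ by direct evaluation. The only cosmetic difference is that in part ii) you impose the $B_{\alpha}$-image at $+\Lambda$ in one stroke, whereas the paper superposes two functions realizing ${}^{t}(a_{1},0)$ and ${}^{t}(0,a_{2})$ separately.
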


\begin{proof}
We embed the function spaces $\mathcal{AC}(\overline{\Omega_{\Lambda,L}})$ 
and $\mathcal{AC}(\overline{\Omega_{\Lambda,R}})$ in 
the function space $\mathcal{AC}(\overline{\Omega_{\Lambda}})$ 
in the following: 
for every $\psi\in\mathcal{AC}(\overline{\Omega_{\Lambda,L}})$, 
we expand the function $\psi$ as $\psi(x)=0$ 
for $x \in \overline{\Omega_{\Lambda,R}}$ and regard 
the function $\psi$ as the function on $\overline{\Omega_{\Lambda}}$. 
We employ the same expansion for 
functions in $\mathcal{AC}(\overline{\Omega_{\Lambda,R}})$. 

i) It is not so difficult to show this part. 
Let $\rho$ be in $\mathbb{R}^{2}$. 
Fix an arbitrary function 
$f\in AC^{1}(\overline{\Omega_{\Lambda,R}})$ 
with $f(+\Lambda)\ne 0$, and take it. 
For an arbitrary number $a_{1}\in\mathbb{C}$ 
we define functions $\psi_{R}$ by 
$\psi_{R}:=(a_{1}/f(+\Lambda))\, 
{}^{t}\!\left( f , i\rho_{+}f\right) 
\in\mathcal{AC}(\overline{\Omega_{\Lambda,R}})$.  
Similarly, take a function 
$g\in AC^{1}(\overline{\Omega_{\Lambda,L}})$ 
with $g(-\Lambda)\ne 0$. 
For an arbitrary number $a_{2}\in\mathbb{C}$ 
we define functions $\psi_{L}$ by 
$\psi_{L}:=(a_{2}/g(-\Lambda))\, 
{}^{t}\!\left( g , i\rho_{-}g\right)
\in\mathcal{AC}(\overline{\Omega_{\Lambda,L}})$.  

In the case where $|\rho_{\pm}|<\infty$, define a function $\psi\in 
\mathcal{AC}(\overline{\Omega_{\Lambda}})$ 
by $\psi:=\psi_{L}+\psi_{R}$. 
In the case where $|\rho_{+}|<\infty$ and $\rho_{-}=+\infty$, 
define a function $\psi\in 
\mathcal{AC}(\overline{\Omega_{\Lambda}})$ by 
$\psi:=\psi_{R}+(b_{1}/g(-\Lambda))\, 
{}^{t}\left( 0 , g\right)$. 
In the case where $\rho_{+}=+\infty$ and $|\rho_{-}|<\infty$, 
define a function $\psi\in \mathcal{AC}(\overline{\Omega_{\Lambda}})$ by 
$\psi:=\psi_{L}+(b_{2}/f(+\Lambda))\, 
{}^{t}\!\left( 0 , f\right)$.
In the case where $\rho_{\pm}=+\infty$, 
define a function $\psi\in 
\mathcal{AC}(\overline{\Omega_{\Lambda}})$ by 
$\psi:=(b_{1}/g(-\Lambda))\, 
{}^{t}\!\left( 0 , g\right)
+(b_{2}/f(+\Lambda))\, 
{}^{t}\!\left( 0 , f\right)$.  
Then, we obtain our desired wave function $\psi$. 

ii) Take functions $f, g\in AC^{1}(\overline{\Omega_{\Lambda,L}})$ 
and $h, k\in AC^{1}(\overline{\Omega_{\Lambda,R}})$ 
with $f(-\Lambda)\ne 0$, $g(-\Lambda)=0$, 
$h(+\Lambda)\ne 0$, and $k(+\Lambda)\ne 0$. 
For arbitrary vector ${{}^{t}}(a_{1}, a_{2})
\in\mathbb{C}^{2}$ we define functions $\varphi_{L}$ and 
$\varphi_{R}$ by 
$\varphi_{L}:=(a_{1}/f(-\Lambda))\, 
{}^{t}\!\left( f , g\right)
\in\mathcal{AC}(\overline{\Omega_{\Lambda,L}})$  
and 
$\varphi_{R}:={}^{t}\left(
(a_{1}\alpha_{1}/h(+\Lambda))h , 
(a_{1}\alpha_{3}/k(+\Lambda))k\right)
\in\mathcal{AC}(\overline{\Omega_{\Lambda,R}})$, 
respectively. 
Define the function $\varphi$ by 
$\varphi:=\varphi_{L}+\varphi_{R}$. 
Then, we reach the function $\varphi$ satisfying 
$\varphi\in D(H_{\alpha})$ with $\varphi(-\Lambda)={{}^{t}}( a_{1} , 0)$. 
In the same way, we can obtain a function 
$\phi$ satisfying $\phi\in D(H_{\alpha})$ with 
$\phi(-\Lambda)={{}^{t}}(0 , a_{2})$. 
Therefore, defining the function $\psi$ by 
$\psi:=\varphi + \phi$, this function is 
our desired one. \qed    
\end{proof}

We here prove Theorem \ref{theo:1}. 
Using integration by parts, 
for every $\psi,\phi \in D(H_{0}^{*})$ 
we have the following equation: 
\begin{eqnarray}
&{}& \langle H_{0}^{*}\psi\, |\,
\phi\rangle_{L^{2}(\Omega_{\Lambda})\oplus L^{2}(\Omega_{\Lambda})}
-\langle\psi\, |\, H_{0}^{*}\phi
\rangle_{L^{2}(\Omega_{\Lambda})\oplus L^{2}(\Omega_{\Lambda})} 
\label{eq:theo-1-1} \\ 
&=&\, -i \bigg\{\psi_{\uparrow}(+\Lambda)^{*}\phi_{\downarrow}(+\Lambda) 
+\psi_{\downarrow}(+\Lambda)^{*}\phi_{\uparrow}(+\Lambda) 
\nonumber \\
&{}&\qquad\quad 
-\psi_{\uparrow}(-\Lambda)^{*}\phi_{\downarrow}(-\Lambda) 
-\psi_{\downarrow}(-\Lambda)^{*}\phi_{\uparrow}(-\Lambda)\bigg\}.
\nonumber 
\end{eqnarray}

i) We prove our statement in the case where 
$\rho\in\mathbb{R}^{2}$ only. 
It is clear that $H_{0}\subset H_{\rho}$. 

First up, we show $H_{\rho}\subset H_{\rho}^{*}$. 
Since $H_{\rho}\subset H_{0}^{*}$, 
Eq.(\ref{eq:theo-1-1}) leads to the equation, 
\begin{eqnarray*}
&{}&\langle H_{\rho}\psi\,|\,
\phi\rangle_{L^{2}(\Omega_{\Lambda})\oplus L^{2}(\Omega_{\Lambda})}
-\langle\psi\, |\, H_{\rho}
\phi\rangle_{L^{2}(\Omega_{\Lambda})\oplus L^{2}(\Omega_{\Lambda})} 
\nonumber \\
&=&\, -i\bigg\{\psi_{\uparrow}(+\Lambda)^{*}
(i\rho_{+}\phi_{\uparrow}(+\Lambda))
+(i\rho_{+}\psi_{\uparrow}(+\Lambda))^{*}\phi_{\uparrow}(+\Lambda) \\
&{}&\qquad\quad 
- \psi_{\uparrow}(-\Lambda)^{*}(i\rho_{-}\phi_{\uparrow}(-\Lambda)) 
- (i\rho_{-}\psi_{\uparrow}(-\Lambda))^{*}\phi_{\uparrow}(-\Lambda) \bigg\} 
= 0
\end{eqnarray*}
for every $\psi, \phi \in D(H_{\rho})$. 
This means that $H_{\rho}$ is symmetric, i.e., 
$H_{\rho}\subset H_{\rho}^{*}$. 

Next, we show $H_{\rho}\supset H_{\rho}^{*}$. 
Based on Lemma \ref{lem:02} (i-1), 
for arbitrary vector ${{}^{t}}(a_{1}, a_{2})
\in\mathbb{C}^{2}$, we employ the wave function $\psi \in D(H_{\rho})$ 
so that $\psi_{\uparrow}(+\Lambda)=a_{1}^{*}$ and 
$\psi_{\uparrow}(-\Lambda)=a_{2}^{*}$. 
Using the definition of the adjoint operator, 
the fact that $H_{\rho}\subset H_{\rho}^{*}\subset H_{0}^{*}$, and 
Eq.(\ref{eq:theo-1-1}), we have the equation, 
\begin{eqnarray*}
0 &{=}& 
\langle H_{\rho}\psi\, |\,
\phi\rangle_{L^{2}(\Omega_{\Lambda})\oplus L^{2}(\Omega_{\Lambda})}
-\langle\psi\, |\, H_{\rho}^{*}
\phi\rangle_{L^{2}(\Omega_{\Lambda})\oplus L^{2}(\Omega_{\Lambda})} \\ 
&{=}&\, -i\bigg\{ a_{1}
(\phi_{\downarrow}(+\Lambda) -i\rho_{+}\phi_{\uparrow}(+\Lambda)) 
- a_{2}(\phi_{\downarrow}(-\Lambda)- i\rho_{-}\phi_{\uparrow}(-\Lambda)) \bigg\}  
\end{eqnarray*}
for every $\phi \in D(H_{\rho}^{*})$. 
Because the complex numbers $a_{1}$ and $a_{2}$ were arbitrarily, 
we have the equations: 
$\phi_{\downarrow}(+\Lambda) -i\rho_{+}\phi_{\uparrow}(+\Lambda)=0$ and 
$\phi_{\downarrow}(-\Lambda)- i\rho_{-}\phi_{\uparrow}(-\Lambda)$ 
for every $\phi \in D(H_{\rho}^{*})$. 
These conditions say that $\phi \in D(H_{\rho})$, namely, 
$D(H_{\rho}^{*})\subset D(H_{\rho})$ and thus $H_{\rho}^{*}\subset H_{\rho}$. 
Therefore, we have showed that the Dirac operator $H_{\rho}$ 
is a self-adjoint extension 
of the minimal Dirac operator: 
$H_{0}\subset H_{\rho}=H_{\rho}^{*}$.

In the same way, we can prove our statement for the case where 
$|\rho_{+}|=\infty$ or $|\rho_{-}|=\infty$ 
with the help of Lemma \ref{lem:02} (i-2)--(i-4). 

ii) Let us fix an arbitrary vector $\alpha\in\mathbb{C}^{4}$ 
in the class (\ref{Cl-alpha}). 
It is clear that $H_{0}\subset H_{\alpha}$. 
By Eq.(\ref{eq:theo-1-1}) together with the fact that 
$H_{\alpha}\subset H_{0}^{*}$, we have the equation, 
\begin{eqnarray*}
&{}&\langle H_{\alpha}\psi\, |\, 
\phi\rangle_{L^{2}(\Omega_{\Lambda})\oplus L^{2}(\Omega_{\Lambda})}
-\langle\psi\, |\, 
H_{\alpha}\phi\rangle_{L^{2}(\Omega_{\Lambda})\oplus L^{2}(\Omega_{\Lambda})} \\
&{=}&\, -i\biggl\{
(\alpha_{1}^{*}\alpha_{3}
+\alpha_{1}\alpha_{3}^{*})
\psi_{\uparrow}(-\Lambda)^{*}\phi_{\uparrow}(-\Lambda)
+(\alpha_{2}^{*}\alpha_{4}+\alpha_{2}\alpha_{4}^{*})
\psi_{\downarrow}(-\Lambda)^{*}\phi_{\downarrow}(-\Lambda) \\ 
&{}&\qquad\quad 
+(\alpha_{1}^{*}\alpha_{4}+\alpha_{2}\alpha_{3}^{*}-1)
\psi_{\uparrow}(-\Lambda)^{*}\phi_{\downarrow}(-\Lambda) \\
&{}&\qquad\qquad 
+(\alpha_{1}\alpha_{4}^{*}
+\alpha_{2}^{*}\alpha_{3}-1)
\psi_\downarrow(-\Lambda)^{*}\phi_{\uparrow}(-\Lambda)
\biggr\} = 0
\end{eqnarray*}
for every $\psi, \phi \in D(H_{\alpha})$. 
This means that $H_{\alpha}$ is symmetric, i.e., 
$H_{\alpha}\subset H_{\alpha}^{*}$.

Based on Lemma \ref{lem:02} ii), 
for arbitrary vector ${{}^{t}}(a_{1}, a_{2})
\in\mathbb{C}^{2}$, we employ the wave function $\psi \in D(H_{\alpha})$ 
so that ${{}^{t}}(\psi_{\uparrow}(-\Lambda) , \psi_{\downarrow}(-\Lambda))
={{}^{t}}(a_{1}^{*}, a_{2}^{*})$. 
Using the definition of the adjoint operator, 
the fact that $H_{\alpha}\subset H_{\alpha}^{*}\subset H_{0}^{*}$, and 
Eq.(\ref{eq:theo-1-1}), we have the equation, 
\begin{eqnarray*}
0 &{=}& 
\langle H_{\alpha}\psi\, |\, 
\phi\rangle_{L^{2}(\Omega_{\Lambda})\oplus L^{2}(\Omega_{\Lambda})} 
-\langle\psi\, |\, 
H_{\alpha}^{*}\phi\rangle_{L^{2}(\Omega_{\Lambda})\oplus L^{2}(\Omega_{\Lambda})}  \\ 
&{=}&\, -i\biggl\{ 
(\alpha_{1}^{*}\phi_{\downarrow}(+\Lambda)
+\alpha_{3}^{*}\phi_{\uparrow}(+\Lambda)
-\phi_{\downarrow}(-\Lambda))a_{1} \\
&{}&\qquad\quad 
+(\alpha_{2}^{*}\phi_{\downarrow}(+\Lambda)
+\alpha_{4}^{*}\phi_{\uparrow}(+\Lambda)
-\phi_{\uparrow}(-\Lambda))a_{2}
\biggr\} 
\end{eqnarray*}
for every $\phi \in D(H_{\alpha}^{*})$. 
Because the complex numbers $a_{1}$ and $a_{2}$ were arbitrarily, 
we have the equations: 
$\alpha_{1}^{*}\phi_{\downarrow}(+\Lambda)
+\alpha_{3}^{*}\phi_{\uparrow}(+\Lambda)
-\phi_{\downarrow}(-\Lambda)=0$ 
and 
$\alpha_{2}^{*}\phi_{\downarrow}(+\Lambda)
+\alpha_{4}^{*}\phi_{\uparrow}(+\Lambda)
-\phi_{\uparrow}(-\Lambda)=0$ 
for every $\phi \in D(H_{\alpha}^{*})$. 
We here note that the immediate computation 
leads to the entries of the inverse boundary matrix as 
$$
B_{\alpha}^{-1}
=
\left(
\begin{array}{cc}
\alpha_{4}^{*} & \alpha_{2}^{*} \\ 
\alpha_{3}^{*} & \alpha_{1}^{*}
\end{array}
\right)
$$
since $\alpha=(\alpha_{1},\alpha_{2},\alpha_{3},\alpha_{4})$ 
satisfies the conditions of the class (\ref{Cl-alpha}). 
Hence it follows that $\phi \in D(H_{\alpha})$, namely, 
$D(H_{\alpha}^{*})\subset D(H_{\alpha})$ and thus $H_{\alpha}^{*}\subset H_{\alpha}$. 
Therefore, we have showed that the Dirac operator $H_{\alpha}$ 
is a self-adjoint extension 
of the minimal Dirac operator: 
$H_{0}\subset H_{\alpha}=H_{\alpha}^{*}$.

\subsection{Proof of Proposition \ref{prop:03'}} 
\label{subsec:proof-prop-03'} 

First up, we rewrite $SU(2)$ in terms of the electron spin, 
that is, in terms of the Hamilton quaternion field 
spanned by the Pauli spin matrices: 
\begin{lemma}
\label{lem:03} 
The special unitary group $SU(2)$ has the following representation: 
$$
SU(2)=S\mathbb{H}=
\left\{
\left(
\begin{array}{cc}
\alpha & - \beta^{*} \\ 
\beta & \alpha^{*}
\end{array}
\right) 
\in \mathbb{H}\, \bigg|\, 
\alpha, \beta \in \mathbb{C},\,\,\, 
|\alpha|^{2}+|\beta|^{2} = 1
\right\}. 
$$
\end{lemma}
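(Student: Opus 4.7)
My plan is to prove the identity $SU(2) = S\mathbb{H}$ by verifying the two inclusions separately, since both are short linear-algebra computations with the $2\times 2$ realization of $\mathbb{H}$.

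For the inclusion $S\mathbb{H}\subseteq SU(2)$, I would start from an arbitrary
$$
A=\left(\begin{array}{cc}\alpha & -\beta^{*}\\ \beta & \alpha^{*}\end{array}\right)\in\mathbb{H}
$$
and first observe that $\det A=|\alpha|^{2}+|\beta|^{2}$, so the constraint $\det A=1$ defining $S\mathbb{H}$ is equivalent to $|\alpha|^{2}+|\beta|^{2}=1$. Then I would compute $AA^{*}$ and $A^{*}A$ by direct matrix multiplication; both collapse to $(|\alpha|^{2}+|\beta|^{2})I_{\mathbb{C}^{2}}=I_{\mathbb{C}^{2}}$, so $A$ is unitary. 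Together with $\det A=1$ this shows $A\in SU(2)$.

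For the reverse inclusion $SU(2)\subseteq S\mathbb{H}$, I would take an arbitrary
$$
U=\left(\begin{array}{cc}a & b\\ c & d\end{array}\right)\in SU(2)
$$
and exploit the well-known cofactor formula for $2\times 2$ inverses: since $\det U=1$, we have
$$
U^{-1}=\left(\begin{array}{cc}d & -b\\ -c & a\end{array}\right).
$$
Because $U$ is unitary we also have $U^{-1}=U^{*}$, and equating entries yields $d=a^{*}$ and $c=-b^{*}$, so $U$ has precisely the quaternionic form defining $\mathbb{H}$. The determinant condition then reads $|a|^{2}+|b|^{2}=1$, and setting $\alpha=a$, $\beta=c=-b^{*}$ (equivalently $b=-\beta^{*}$) puts $U$ exactly in $S\mathbb{H}$.

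There is no real obstacle here; the whole argument is bookkeeping around the identity $U^{-1}=U^{*}$ and the cofactor formula in the $2\times 2$ case. The only subtle point is getting the conjugation conventions consistent between the definition of $\mathbb{H}$ (with $-\beta^{*}$ in the upper right) and the standard entries of $U\in SU(2)$, which I would handle by carefully labelling $\alpha:=a$ and $\beta:=c$ and then reading off $b=-\beta^{*}$, $d=\alpha^{*}$ from the two relations above.
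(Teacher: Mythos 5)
Your proof is correct, but it follows a genuinely different route from the paper. You split the identity into two inclusions: $S\mathbb{H}\subseteq SU(2)$ by directly computing $AA^{*}=A^{*}A=(|\alpha|^{2}+|\beta|^{2})I$, and $SU(2)\subseteq S\mathbb{H}$ by combining the $2\times 2$ cofactor (adjugate) formula for $U^{-1}$, valid because $\det U=1$, with unitarity $U^{-1}=U^{*}$, which immediately forces $d=a^{*}$ and $c=-b^{*}$; this is clean, avoids any case analysis, and the determinant condition then gives $|\alpha|^{2}+|\beta|^{2}=1$ at once. The paper instead writes the entries in polar form $u_{j}=|u_{j}|e^{i\theta_{j}}$, extracts from $UU^{*}=U^{*}U=I$ the modulus relations $|u_{1}|^{2}+|u_{3}|^{2}=1$, $|u_{2}|=|u_{3}|$, $|u_{1}|=|u_{4}|$ together with a phase relation, and then uses $\det U=1$ and a reductio ad absurdum (split into the cases $u_{1}u_{3}\neq 0$ and $u_{1}u_{3}=0$) to pin down the phases $e^{i(\theta_{1}+\theta_{4})}=1$ and $e^{i(\theta_{2}+\theta_{3})}=-1$, from which $u_{4}=u_{1}^{*}$ and $u_{2}=-u_{3}^{*}$ are read off. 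What your argument buys is brevity and transparency for the lemma itself; what the paper's longer computation buys is a set of intermediate equivalences characterizing membership in $U(2)$ and $SU(2)$ in terms of the moduli and arguments of the entries, which are explicitly reused in the proof of Proposition \ref{prop:03'} to extract the phase of $\det U$ and obtain the factorization $U(2)=U(1)\,S\mathbb{H}$. If one adopted your proof of Lemma \ref{lem:03}, that later argument would need its own (short) computation of $\det U$ for a general unitary matrix, e.g.\ via $|\det U|=1$, rather than citing the lemma's by-products.
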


\begin{proof} 
In this proof we use the following representation. 
Introducing the argument $\theta_{j}\in [0 , 2\pi)$ 
of each entry $u_{j}$ of the matrix $U\in M_{2}(\mathbb{C})$, 
we represent $U$ as 
\begin{equation}
U=
\left(
\begin{array}{cc}
u_{1} & u_{2} \\ 
u_{3} & u_{4} 
\end{array}
\right)
\,\,\, 
\textrm{with}\,\,\, 
u_{j} =|u_{j}|e^{i\theta_{j}},\,\,\, 
j=1,\cdots,4.  
\label{eq:representation-matrix}
\end{equation}

Let us handle an arbitrary $U\in U(2)$ for a while. 
The unitarity of $U$ leads to 
the equations: 
\begin{eqnarray}
I_{\mathbb{C}^{2}}&=UU^{*}
=\left(
\begin{array}{cc}
|u_{1}|^{2}+|u_{2}|^{2} & 
u_{1}u_{3}^{*}+u_{2}u_{4}^{*} \\ 
u_{1}^{*}u_{3}+u_{2}^{*}u_{4} 
& |u_{3}|^{2}+|u_{4}|^{2}
\end{array}
\right). 
\label{eq:unitary-eq01} \\ 
I_{\mathbb{C}^{2}}&= U^*U
=
\left(
\begin{array}{cc}
|u_{1}|^{2}+|u_{3}|^{2} 
& u_{1}^{*}u_{2}+u_{3}^{*}u_{4} \\
u_{1}u_{2}^{*}+u_{3}u_{4}^{*} 
& |u_{2}|^{2}+|u_{4}|^{2}
\end{array}
\right). 
\label{eq:unitary-eq02}
\end{eqnarray}
Comparing the diagonal entries in the first row 
and the first column of both sides of Eq.(\ref{eq:unitary-eq02}), 
we have the equality:
\begin{equation}
|u_{1}|^{2}+|u_{3}|^{2}=1. 
\label{eq:5-3-1}
\end{equation}
In addition, we similarly have the equality, $|u_{1}|^{2}+|u_{2}|^{2}=1$, 
by Eq.(\ref{eq:unitary-eq01}). 
Thus, it follows from these two equalities that
\begin{equation}
|u_{2}|=|u_{3}|.  
\label{eq:5-3-2}
\end{equation}
In the same way, using the equalities, $|u_{1}|^{2}+|u_{2}|^{2}=1$ 
and $|u_{2}|^{2}+|u_{4}|^{2}=1$, we reach the equality: 
\begin{equation}
|u_{1}|=|u_{4}|. 
\label{eq:5-3-3}
\end{equation}
Comparing the off-diagonal entries in the first row 
and the second column of both sides of in Eq.(\ref{eq:unitary-eq01}), 
and using Eqs.(\ref{eq:5-3-2}) and (\ref{eq:5-3-3}),  
we have the equation, 
$$|u_{1}||u_{3}|\left( 
e^{i(\theta_{1}-\theta_{3})}+
e^{i(\theta_{2}-\theta_{4})}
\right)=0.
$$ 
Multiplying the both sides of this by $e^{i(\theta_{3}+\theta_{4})}$, 
we have 
\begin{equation}
|u_{1}||u_{3}|
\left( 
e^{i(\theta_{1}+\theta_{4})}+
e^{i(\theta_{2}+\theta_{3})}
\right)=0. 
\label{eq:5-3-4}
\end{equation}
Thus, we have derived the conditions (\ref{eq:5-3-1}), (\ref{eq:5-3-2}), 
(\ref{eq:5-3-3}), and (\ref{eq:5-3-4}) 
from the condition $U\in U(2)$. 

Conversely, it is easy to check that 
Eqs.(\ref{eq:5-3-1}), (\ref{eq:5-3-2}), 
(\ref{eq:5-3-3}), and (\ref{eq:5-3-4}) 
bring us to the condition  $U\in U(2)$. 
Consequently, the condition  $U\in U(2)$ is equivalent to 
the conditions (\ref{eq:5-3-1}), (\ref{eq:5-3-2}), 
(\ref{eq:5-3-3}), and (\ref{eq:5-3-4}): 
\begin{equation}
U\in U(2) 
\Longleftrightarrow 
\textrm{(\ref{eq:5-3-1}), (\ref{eq:5-3-2}), 
(\ref{eq:5-3-3}), and (\ref{eq:5-3-4})}. 
\label{eq:5-3-8'}
\end{equation}

Let us consider the case where $U\in SU(2)$ from now on. 
Then, we have an extra condition: 
\begin{equation}
\mathrm{det}\, U=
u_{1}u_{4}-u_{2}u_{3}=1. 
\label{eq:unitary-eq03}
\end{equation}
Combining Eq.(\ref{eq:unitary-eq03}) 
with Eqs.(\ref{eq:5-3-1})--(\ref{eq:5-3-3}) leads to 
the equations, 
$$
|u_{1}|^{2}+|u_{3}|^{2}=1=
u_{1}u_{4}-u_{2}u_{3}=
|u_{1}|^{2}e^{i(\theta_{1}+\theta_{4})}
-|u_{3}|^{2}e^{i(\theta_{2}+\theta_{3})},
$$
which implies that 
\begin{equation}
|u_{3}|^{2}\left(1+e^{i(\theta_{2}+\theta_{3})}\right)
=
|u_{1}|^{2}\left(e^{i(\theta_{1}+\theta_{4})}-1\right). 
\label{eq:5-3-5}
\end{equation}

Assume that $u_{1}u_{3}\ne 0$ now. 
Then, Eq.(\ref{eq:5-3-4}) brings us to the equation: 
\begin{equation}
e^{i(\theta_{1}+\theta_{4})}+e^{i(\theta_{2}+\theta_{3})}=0. 
\label{eq:5-3-6}
\end{equation}
Eqs.(\ref{eq:5-3-5}) and (\ref{eq:5-3-6}) say that 
$|u_{3}|^{2}\left(1-e^{i(\theta_{1}+\theta_{4})}\right)
=
|u_{1}|^{2}\left(e^{i(\theta_{1}+\theta_{4})}-1\right)$. 
Suppose that $1-e^{i(\theta_{1}+\theta_{4})}\ne 0$ here. 
Then, the above equation leads to the relation,  
$|u_{3}|^{2}=\, -|u_{1}|^{2}<0$. 
This is a contradiction. 
Thus, by the reductio ad absurdum, 
we know that $e^{i(\theta_{1}+\theta_{4})}=1$ and therefore 
$e^{i(\theta_{2}+\theta_{3})}=\, -1$ by 
Eq.(\ref{eq:5-3-6}). 

On the other hand, assume that $u_{1}u_{3}=0$. 
Then, we have the condition, $u_{1}\ne 0$ or $u_{3}\ne 0$, 
by Eq.(\ref{eq:5-3-1}). 
In the case where $u_{1}\ne 0$, 
since we have the equality $u_{3}=0$, 
the equation $e^{i(\theta_{1}+\theta_{4})}=1$ 
comes up from Eq.(\ref{eq:5-3-5}). 
In the case $u_{3}\ne 0$, similarly,  
the equation $e^{i(\theta_{2}+\theta_{3})}=\, -1$ 
is derived from Eq.(\ref{eq:5-3-5}). 
These arguments make us realize that
\begin{equation}
\left\{ \begin{array}{l}
\textrm{if $u_{1}\ne 0$, 
then $e^{i(\theta_{1}+\theta_{4})}=1$}; \\ 
\textrm{if $u_{3}\ne 0$, 
then $e^{i(\theta_{2}+\theta_{3})}=\, -1$}.
\end{array}\right.
\label{eq:5-3-7}
\end{equation}
In this way, we succeeded in showing that 
the condition $U\in SU(2)$ implies 
the conditions (\ref{eq:5-3-1}), (\ref{eq:5-3-2}), 
(\ref{eq:5-3-3}), (\ref{eq:5-3-4}), 
and the extra condition (\ref{eq:5-3-7}). 

We here show that adding the condition (\ref{eq:5-3-7}) 
to the conditions (\ref{eq:5-3-1}), (\ref{eq:5-3-2}), 
(\ref{eq:5-3-3}), and (\ref{eq:5-3-4}) 
completes a necessary and sufficient condition 
so that $U\in SU(2)$. 
In the case where $u_{1}u_{3}\ne 0$, 
we have the equations, $\mathrm{det}\, U=u_{1}u_{4}-u_{2}u_{3}
=|u_{1}|^{2}e^{i(\theta_{1}+\theta_{4})}-|u_{3}|^{3}e^{i(\theta_{2}+\theta_{3})}
=|u_{1}|^{2}+|u_{3}|^{2}=1$, 
since $e^{i(\theta_{1}+\theta_{4})}=1$ and $e^{i(\theta_{2}+\theta_{3})}=\, -1$ 
by the condition (\ref{eq:5-3-7}). 
In the case $u_{1}u_{3}=0$, 
if $u_{3}=0$ (resp. $u_{1}=0$), then we have the equality, 
$|u_{1}|=1$ (resp. $|u_{3}|=1$ ) by Eq.(\ref{eq:5-3-1}). 
Thus, we reach the computation, 
$\mathrm{det}\, U=|u_{1}|^{2}e^{i(\theta_{1}+\theta_{4})}=1$ 
(resp. $\mathrm{det}\, U=\, -|u_{3}|^{2}e^{i(\theta_{2}+\theta_{3})}=1$ ) 
by the condition (\ref{eq:5-3-7}).      
Therefore, the condition, $U\in SU(2)$, is equivalent to 
the conditions, (\ref{eq:5-3-1}), (\ref{eq:5-3-2}), 
(\ref{eq:5-3-3}), (\ref{eq:5-3-4}), 
and (\ref{eq:5-3-7}): 
\begin{equation}
U\in SU(2) 
\Longleftrightarrow 
\textrm{(\ref{eq:5-3-1}), (\ref{eq:5-3-2}), 
(\ref{eq:5-3-3}), (\ref{eq:5-3-4}), 
and (\ref{eq:5-3-7})}. 
\label{eq:5-3-8}
\end{equation}     

Based on this equivalence (\ref{eq:5-3-8}), set our desired 
complex numbers as 
$\alpha:=u_{1}$ and $\beta:=u_{3}$, respectively. 
Then, we have 
$u_{2}=0=\, -u_{3}^{*}$ if $u_{3}=0$, 
and $u_{2}=|u_{3}|e^{i\theta_{2}}=\, -|u_{3}|e^{-i\theta_{3}}
=\, -u_{3}^{*}$ if $u_{3}\ne 0$, 
by (\ref{eq:5-3-2}) and (\ref{eq:5-3-7}). 
We have 
$u_{4}=0= u_{1}^{*}$ if $u_{1}=0$, and 
$u_{4}=|u_{1}|e^{i\theta_{4}}=|u_{1}|e^{-i\theta_{1}}=u_{1}^{*}$ 
if $u_{1}\ne 0$, 
by (\ref{eq:5-3-3}) and (\ref{eq:5-3-7}). 
Thus, we obtain the statement of our lemma. 
\qed 
\end{proof}

Now we prove Proposition \ref{prop:03'}. 
We use the matrix representation (\ref{eq:representation-matrix}) again. 

In the case where $u_{3}\neq 0$, through the equivalence (\ref{eq:5-3-8'}), 
multiplying the both sides of Eq.(\ref{eq:5-3-4}) 
by $|u_{1}|$ gives us the expression 
$|u_{1}|^{2}(e^{i(\theta_{1}+\theta_{4})}+
e^{i(\theta_{2}+\theta_{3})})=0$. 
Thus, we can compute the determinant of $U$ 
as
\begin{eqnarray*}
\mathrm{det}\, U=u_{1}u_{4}-u_{2}u_{3}
&=&
|u_{1}|^{2}e^{i(\theta_{1}+\theta_{4})}-
|u_{3}|^{2}e^{i(\theta_{2} + \theta_{3})} \\ 
&=&\, -(|u_{1}|^{2}+|u_{3}|^{2})e^{i(\theta_{2} + \theta_{3})} 
= e^{i(\theta_{2} + \theta_{3}+\pi)}. 
\end{eqnarray*} 
Here we used Eq.(\ref{eq:5-3-1}) of the equivalence 
(\ref{eq:5-3-8'}), and the equality $e^{i\pi}=\, -1$. 
Thus, we realize that $e^{-i(\theta_{2}+\theta_{3}+ \pi)/2}U\in SU(2)$. 
Define our desired complex numbers as 
$\gamma_{1}:= e^{-i(\theta_{2}+\theta_{3}+ \pi)/2}u_{1}= 
e^{-i(\theta_{2}+\theta_{3}+ \pi)/2}\alpha$, 
$\gamma_{2}:= e^{-i(\theta_{2}+\theta_{3}+\pi)/2}u_{3}
= e^{-i(\theta_{2}+\theta_{3}+\pi)/2}\beta$, 
and $\gamma_{3}:=e^{i(\theta_{2}+\theta_{3}+ \pi)/2}$, respectively. 
Then, Lemma \ref{lem:03} gives the representation: 
$$
U=e^{i(\theta_{2} + \theta_{3}+\pi)/2}
e^{-i(\theta_{2} + \theta_{3}+\pi)/2}U
=\gamma_{3}
\left(
\begin{array}{cc}
\gamma_{1} & -\gamma_{2}^{*} \\ 
\gamma_{2} & \gamma_{1}^{*}
\end{array}
\right). 
$$

On the other hand, in the case $u_{3}= 0$, 
we can compute the determinant of $U$ 
as $\mathrm{det}\, U=e^{i(\theta_{1}+\theta_{4})}$ 
since we have the value of $|u_{1}|^{2}$ as $|u_{1}|^{2}=1$ 
by Eq.(\ref{eq:5-3-1}) of the equivalence 
(\ref{eq:5-3-8'}). 
Thus, we realize that $e^{-i(\theta_{1}+\theta_{4})/2}U\in SU(2)$. 
Based on Lemma \ref{lem:03}, 
define our desired complex numbers as 
$\gamma_{1}:= e^{-i(\theta_{1}+\theta_{4})/2}u_{1}=e^{-i(\theta_{1}+\theta_{4})/2}\alpha$, 
$\gamma_{2}:= e^{-i(\theta_{1}+\theta_{4})/2}u_{3}=e^{-i(\theta_{1}+\theta_{4})/2}\beta$, 
and $\gamma_{3}:=e^{i(\theta_{1}+\theta_{4})/2}$, 
respectively. 
Then, we reach the conclusion,  
$$
U=e^{i(\theta_{1} + \theta_{4})/2}
e^{-i(\theta_{1} + \theta_{4})/2}U
=\gamma_{3}
\left(
\begin{array}{cc}
\gamma_{1} & -\gamma_{2}^{*} \\ 
\gamma_{2} & \gamma_{1}^{*}
\end{array}
\right). 
$$

These are the construction of the representation 
in our proposition.

\subsection{Proof of Theorem \ref{theo:2}} 
\label{subsec:proof-theo-2} 

Before proving Theorem \ref{theo:2}, we make a small remark: 
For every $\psi\in D(H_{U})$, there are $\psi_{0}\in D(H_{0})$ 
and $c_{L}, c_{R}\in\mathbb{C}$ so that 
\begin{equation}
\psi=\psi_{0}+c_{L}\psi_{L}^{+}+c_{R}\psi_{R}^{+}+U(c_{L}\psi_{L}^{+}+c_{R}\psi_{R}^{+})
\label{eq:HU}
\end{equation} 
by Propositions \ref{prop:von-Neumann} and 
\ref{prop:deficiency-indices} together with 
Eq.(\ref{eq:solution}).

We prove Theorem \ref{theo:2} i) here. 
Let us suppose that $U$ is diagonal. 
In this case, it is clear that there are complex numbers 
$\gamma_{L}, \gamma_{R}\in\mathbb{C}$ 
so that 
$$
U=
\left(
\begin{array}{cc}
\gamma_{L} & 0 \\ 
0 & \gamma_{R}
\end{array}
\right),\quad 
|\gamma_{L}|=|\gamma_{R}|=1, 
$$
and thus, the operation of the unitary operator $U$ on 
$\mathcal{K}_{+}(H_{0})$ is determined by  
$U\psi_{L}^{+}=\gamma_{L}\psi_{L}^{-}$ 
and 
$U\psi_{R}^{+}=\gamma_{R}\psi_{R}^{-}$. 
By Eq.(\ref{eq:HU}), 
we can represent the boundary value $\psi(-\Lambda)$ as
$$
\psi(-\Lambda)=c_{L}\psi_{L}^{+}(-\Lambda)
+c_{L}\gamma_{L}\psi_{L}^{-}(-\Lambda)  
=c_{L}Ne^{-\sqrt{1+m^{2}}\, \Lambda}
\left(
\begin{array}{c}
1+\gamma_{L} \\ 
-\mu+\gamma_{L}\mu^{*}
\end{array}
\right),
$$
and the boundary value $\psi(+\Lambda)$ as
$$
\psi(+\Lambda)
=c_{R}\psi_{R}^{+}(+\Lambda)+c_{R}\gamma_{R}\psi_{R}^{-}(+\Lambda)
=c_{R}Ne^{-\sqrt{1+m^{2}}\, \Lambda}
\left(
\begin{array}{c}
1+\gamma_{R} \\ 
\mu-\gamma_{R}\mu^{*}
\end{array}
\right).
$$
We set $\theta_{\mu}$ as $\theta_{\mu}:=\arg\mu$, and so, 
we have $\mu=e^{i\theta_{\mu}}$. 
Here $\mu$ was given as $\mu=(1+im)/\sqrt{1+m^{2}}$, 
and thus, $\cos\theta_{\mu}=1/\sqrt{1+m^{2}}$ 
and $\sin\theta_{\mu}=m/\sqrt{1+m^{2}}$.  
We compare the boundary values 
$\psi_{\uparrow}(-\Lambda)$ and $\psi_{\downarrow}(-\Lambda)$: 
 
In the case where $\gamma_{L} \ne\, -1$, we have
\begin{eqnarray*}
\frac{\psi_{\downarrow}(-\Lambda)}{\psi_{\uparrow}(-\Lambda)}
&{=}&\frac{-\mu+\gamma_{L}\mu^{*}}{1+\gamma_{L}}  \\ 
&{=}& \frac{(-\mu+\gamma_{L}\mu^{*})(1+\gamma_{L}^{*})}{(1+\gamma_{L})
(1+\gamma_{L}^{*})}  
=\, -\, \frac{\mu-\gamma_{L}\mu^{*}
+\gamma_{L}^{*}\mu-\mu^{*}}{2+\gamma_{L}+\gamma_{L}^{*}} \\ 
&{=}& \, -i\frac{\sin(\theta_{\mu}-\theta_{L})
+\sin\theta_{\mu}}{1+\cos\theta_{L}}  
=i\left(\cos\theta_{\mu}\tan\frac{\theta_{L}}{2}-\sin\theta_{\mu}\right). 
\end{eqnarray*}
The value of $\cos\theta_{\mu}\tan(\theta_{L}/2)-\sin\theta_{\mu}$ 
runs over the whole $\mathbb{R}$ when the angular $\theta_{L}$ 
runs over $[0, 2\pi) \setminus \{ \pi \}$, 
and then, the correspondence 
$[0, 2\pi) \setminus \{ \pi \}\ni\theta_{L}\longrightarrow 
\rho_{-}\in\mathbb{R}$ 
makes the one-to-one correspondence.  
On the other hand, in the case where $\gamma_{L} =\, -1$, 
we have 
$\psi_{\uparrow}(-\Lambda)=0$ and 
$\psi_{\downarrow}(-\Lambda)=\, -c_{L}Ne^{-\sqrt{1 + m^2} \Lambda}(\mu+\mu^{*})$.

Similarly, compare the boundary values 
$\psi_{\uparrow}(+\Lambda)$ and $\psi_{\downarrow}(+\Lambda)$: 
In the case where $\gamma_{R}\ne\, -1$, we have 
$$
\frac{\psi_{\downarrow}(+\Lambda)}{\psi_{\uparrow}(+\Lambda)} 
=\frac{\mu-\gamma_{R}\mu^{*}}{1+\gamma_{R}} 
=i\left(-\cos\theta_{\mu}\tan\frac{\theta_{R}}{2}
+\sin\theta_{\mu}\right).
$$
The correspondence 
$[0, 2\pi) \setminus \{ \pi \}\ni\theta_{R}\longrightarrow 
\rho_{+}\in\mathbb{R}$ 
makes the one-to-one correspondence. 
In the case where $\gamma_{R}=\, -1$, we have 
$\psi_{\uparrow}(+\Lambda)=0$ and 
$\psi_{\downarrow}(+\Lambda)=c_{R}Ne^{-\sqrt{1+ m^{2}} \Lambda}(\mu+\mu^{*})$. 

Therefore, we realize that the condition, $D(H_{\rho})=D(H_{U})$, 
is equivalent to the correspondence: 
$i\rho_{-}=
\psi_{\downarrow}(-\Lambda)/\psi_{\uparrow}(-\Lambda)$ 
for $\gamma_{L}\ne\, -1$ 
and $\rho_{-}
=|\psi_{\downarrow}(-\Lambda)/\psi_{\uparrow}(-\Lambda)|=+\infty$ 
for $\gamma_{L}=\, -1$, 
and 
$i\rho_{+}=
\psi_{\downarrow}(+\Lambda)/\psi_{\uparrow}(+\Lambda)$ 
for $\gamma_{R}\ne\, -1$ 
and 
$\rho_{+}=|\psi_{\downarrow}(+\Lambda)/\psi_{\uparrow}(+\Lambda)|=+\infty$ 
for $\gamma_{R}=\, -1$, 
which gives our desired correspondence.
 
We prove Theorem \ref{theo:2} ii) now. 
First up, 
Proposition \ref{prop:03'} gives the representation of $U$: 
there are complex numbers $\gamma_{1}, \gamma_{2}, \gamma_{3}\in\mathbb{C}$ 
so that   
$$
U=\gamma_{3}
\left(
\begin{array}{cc}
\gamma_{1} & -\gamma_{2}^{*} \\ 
\gamma_{2} & \gamma_{1}^{*}
\end{array}
\right)\,\,\, 
\textrm{with}\,\,\,  
|\gamma_{1}|^{2}+|\gamma_{2}|^{2}
=|\gamma_{3}|= 1,\, 
\gamma_{2} \neq 0. 
$$
Here the fact, $\gamma_{2}\ne 0$, comes from the assumption that 
$U$ is non-diagonal. 
Thus, the operation of $U$ on $\mathcal{K}_{+}(H_{0})$ 
is determined by  
$U\psi_{L}^{+}=\gamma_{1}\gamma_{3}\psi_{L}^{-}
-\gamma_{2}^{*}\gamma_{3}\psi_{R}^{-}$ 
and 
$U\psi_{R}^{+}=\gamma_{2}\gamma_{3}\psi_{L}^{-}
+\gamma_{1}^{*}\gamma_{3}\psi_{R}^{-}$. 
Using Eq.(\ref{eq:HU}), we can compute individual boundary values 
$\psi(-\Lambda)$ and $\psi(+\Lambda)$ as 
\begin{eqnarray*}
\psi(-\Lambda)
&{=}& c_{L}\psi_{L}^{+}(-\Lambda)+c_{L}\gamma_{1}\gamma_{3}\psi_{L}^{-}(-\Lambda)
+c_{R}\gamma_{2}\gamma_{3}\psi_{L}^{-}(-\Lambda) \\ 
&{=}& Ne^{-\sqrt{1+m^{2}}\Lambda}
\left(
\begin{array}{cc}
1+\gamma_{1}\gamma_{3} & \gamma_{2}\gamma_{3} \\ 
-\mu+\gamma_{1}\gamma_{3}\mu^{*} & \gamma_{2}\gamma_{3}\mu^{*}
\end{array}
\right)
\left(
\begin{array}{c}
c_{L} \\ 
c_{R}
\end{array}
\right)
\end{eqnarray*}
and 
\begin{eqnarray*}
\psi(+\Lambda)
&{=}& c_{R}\psi_{R}^{+}(+\Lambda)
-c_{L}\gamma_{2}^{*}\gamma_{3}\psi_{R}^{-}(+\Lambda)
+c_{R}\gamma_{1}^{*}\gamma_{3}\psi_{R}^{-}(+\Lambda) \\ 
&{=}& Ne^{-\sqrt{1+ m^{2}} \Lambda}
\left(
\begin{array}{cc}
-\gamma_{2}^{*}\gamma_{3} & 1+\gamma_{1}^{*}\gamma_{3} \\ 
\gamma_{2}^{*}\gamma_{3}\mu^{*} 
& \mu-\gamma_{1}^{*}\gamma_{3}\mu^{*}
\end{array}
\right)
\left(
\begin{array}{c}
c_{L} \\ 
c_{R}
\end{array}
\right). 
\end{eqnarray*}
We remember that 
$\gamma_{2}\neq 0$ and $\gamma_{3}\neq 0$, 
and then, 
$$
\mathrm{det}\, 
\left(
\begin{array}{cc}
1+\gamma_{1}\gamma_{3} & \gamma_{2}\gamma_{3} \\ 
-\mu+\gamma_{1}\gamma_{3}\mu^{*}  
& \gamma_{2}\gamma_{3}\mu^{*}
\end{array}
\right)
=\gamma_{2}\gamma_{3}(\mu + \mu^{*})
= \frac{2\gamma_{2}\gamma_{3}}{\sqrt{1+m^{2}}}\ne 0.  
$$
Thus, noting  $\gamma_{3}^{-1}=\gamma_{3}^{*}$, 
we can compute the following inverse matrix: 
$$
\left(
\begin{array}{cc}
1+\gamma_{1}\gamma_{3} & \gamma_{2}\gamma_{3} \\ 
-\mu+\gamma_{1}\gamma_{3}\mu^{*} 
& \gamma_{2}\gamma_{3}\mu^{*}
\end{array}
\right)^{-1}
=\frac{1}{\gamma_{2}(\mu + \mu^{*})}
\left(
\begin{array}{cc}
\gamma_{2}\mu^{*} & -\gamma_{2} \\ 
\gamma_{3}^{*}\mu-\gamma_{1}\mu^{*} 
& \gamma_{1}+\gamma_{3}^{*}
\end{array}
\right).
$$
Thus, define a $2\times 2$ matrix $V$ by 
$$
V\equiv
\left(
\begin{array}{cc}
v_{1} & v_{2} \\ 
v_{3} & v_{4} \\
\end{array}
\right)  
:= 
\left(
\begin{array}{cc}
-\gamma_{2}^{*}\gamma_{3} 
& 1+\gamma_{1}^{*}\gamma_{3} \\ 
\gamma_{2}^{*}\gamma_{3}\mu^{*} 
& \mu-\gamma_{1}^{*}\gamma_{3}\mu^{*}
\end{array}
\right)
\left(
\begin{array}{cc}
1+\gamma_{1}\gamma_{3} & \gamma_{2}\gamma_{3} \\ 
-\mu+\gamma_{1}\gamma_{3}\mu^{*} 
& \gamma_{2}\gamma_{3}\mu^{*}
\end{array}
\right)^{-1}.   
$$
Then, we have 
$$
V=
\frac{\sqrt{1+m^{2}}}{\gamma_{2}}
\left(
\begin{array}{cc}
i\left\{\Im(\gamma_{1}^{*}\mu)+\Im(\gamma_{3}^{*}\mu)\right\} 
& \Re\gamma_{1}+\Re\gamma_{3} \\ 
-\Re\gamma_{1}+\Re(\gamma_{3}^{*}\mu^{2}) 
& i\left\{\Im(\gamma_{1}\mu)+\Im(\gamma_{3}^{*}\mu)\right\}
\end{array}
\right). 
$$
Thus, we reach the boundary condition: 
$\psi(+\Lambda)=V\psi(-\Lambda)$ 
for every $\psi \in D(H_{U})$. 
We set $v_{j}'$ as 
$v_{j}':=i(\gamma_{2}/|\gamma_{2}|)v_{j}$, $j = 1, \cdots , 4$, 
and then, we have $v_{j}'{v_{k}'}^{*}=v_{j}v_{k}^{*}$. 
Then, $v_{1}'$ and $v_{4}'$ are real numbers, 
and $v_{2}'$ and $v_{3}'$ are purely imaginary numbers, 
which implies the relations: 
$\Re(v_{1}v_{2}^{*})=\Re(v_{1}'{v_{2}'}^{*})=0$, 
$\Re(v_{1}v_{3}^{*})=\Re(v_{1}'{v_{3}'}^{*})=0$, 
$\Re(v_{2}v_{4}^{*})=\Re(v_{2}'{v_{4}'}^{*})=0$, 
and $\Re(v_{3}v_{4}^{*})=\Re(v_{3}'{v_{4}'}^{*})=0$. 
So, we have confirmed the first part of conditions 
of the class (\ref{Cl-alpha}). 
We check the last two conditions of the class (\ref{Cl-alpha}): 
The immediate computation easily bring us 
to $v_{1}v_4^{*}+v_{2}v_{3}^{*}=1$ using
$|\gamma_{1}|^{2}+|\gamma_{2}|^{2}=1$.  
We here note that $v_{k}^{*}=v_{k}(\gamma_{2}/\gamma_{2}^{*})$, $k=2, 3$, 
which implies $v_{2}v_{3}^{*}=v_{2}\{ v_{3}(\gamma_{2}/\gamma_{2}^{*})\}
=\{ v_{2}(\gamma_{2}/\gamma_{2}^{*})\}v_{3}=v_{2}^{*}v_{3}^{*}$. 
Thus, we have $v_{1}v_{4}^{*}+v_{2}^{*}v_{3}=v_{1}v_{4}^{*}+v_{2}v_{3}^{*}=1$. 
Therefore, we can conclude from the above argument 
that the vector $v=(v_{1}, v_{2}, v_{3},v_{4}) \in \mathbb{C}^{4}$ 
is in the class (\ref{Cl-alpha}), 
and then, $V=B_{v}$. 
Therefore, the condition, $D(H_{\alpha})=D(H_{U})$, 
is equivalent to the correspondence 
$\alpha=v$. 
We accomplished the proof of the part ii). 

\subsection{Proof of Proposition \ref{prop:06}} 

We denote by $\mathcal{A}_{0}$ 
the set on the right hand side of our desired representation. 
It is evident that $\mathcal{A}_{0}\subset\mathcal{A}$. 
So, the only thing we have to do is that 
we show $\mathcal{A}\subset\mathcal{A}_{0}$. 
For every $B_{\alpha}\in\mathcal{A}$,  
set $\theta_{j}$ as $\theta_{j}= \arg\alpha_{j}$. 
Since the vector $\alpha$ is in the class (\ref{Cl-alpha}),
$\alpha_{1}\alpha_{2}^{*}$, $\alpha_{1}\alpha_{3}^{*}$, 
$\alpha_{2}\alpha_{4}^{*}$, and $\alpha_{3}\alpha_{4}^{*}$ are purely 
imaginary numbers. 
Moreover, the last condition of the class (\ref{Cl-alpha}) says 
that $\alpha_{2}\alpha_{3}^{*}=1-\alpha_{1}\alpha_{4}^{*}=\alpha_{2}^{*}\alpha_{3}$. 
That is, $\alpha_{2}\alpha_{3}^{*}$ is a real number. 
Thus, it follows from the last condition, 
that $\alpha_{1}\alpha_{4}^{*}$ is also a real number, 
and $\alpha_{1}\ne 0$ or $\alpha_{3}\ne 0$. 
In the case where $\alpha_{1}\neq 0$, 
setting $\theta\in[0,2\pi)$, and 
$a_{1}, a_{2}, a_{3}, a_{4}\in\mathbb{R}$ 
as $\theta:=\arg(\alpha_{1}/|\alpha_{1}|)$, 
$a_{1}:=|\alpha_{1}|$, 
$a_{2}:=(\alpha_{1}\alpha_{2}^{*})^{*}/i|\alpha_{1}|$, 
$a_{3}:=(\alpha_{1}\alpha_{3}^{*})^{*}/i|\alpha_{1}|$, 
and $a_{4}:=(\alpha_{1}\alpha_{4}^{*})^{*}/|\alpha_{1}|$, 
we immediately obtain the representation of $B_{\alpha}$ 
in $\mathcal{A}_{0}$. 
In the case $\alpha_{1}=0$, 
we only have to set $\theta\in[0,2\pi)$, and 
$a_{1}, a_{2}, a_{3}, a_{4}\in\mathbb{R}$ 
by $\theta:=\arg(-i\alpha_{3}/|\alpha_{3}|)$, 
$a_{1}:=i\alpha_{1}\alpha_{3}^{*}/|\alpha_{3}|$, 
$a_{2}:=\alpha_{2}\alpha_{3}^{*}/|\alpha_{3}|$, 
$a_{3}:=|\alpha_{3}|$, 
and $a_{4}:=i(\alpha_{3}\alpha_{4}^{*})^{*}/|\alpha_{3}|$, 
respectively, 
and then, 
we reach our desired fact $B_{\alpha}\in\mathcal{A}_{0}$. 
Thus, the two cases imply that 
$\mathcal{A}\subset\mathcal{A}_{0}$.  
Therefore, we can conclude the proof of 
the equality, $\mathcal{A}=\mathcal{A}_{0}$. 

\subsection{Proof of Proposition \ref{prop:inverse}}
\label{subsection:prop-inverse}

We prove Proposition \ref{prop:inverse} here. 
First up, it immediately follows from the definition of $\gamma_{1}$, 
$\gamma_{2}$, and $\Gamma_{0}$ that 
$|\gamma_{1}|^{2}+|\gamma_{2}|^{2}=1$. 
We here remark that this equation gives us the equation, 
\begin{eqnarray}
1=
\Gamma_{0}^{2}
&\Bigl[ 
\frac{4}{1+m^{2}}
+\sum_{j=1}^{4}|\alpha_{j}|^{2} 
-2\Re(\mu^{*}\alpha_{1}\alpha_{2}^{*})
+2\Re(\mu^{*}\alpha_{1}\alpha_{3}^{*})
-2\Re(\mu^{*2}\alpha_{1}\alpha_{4}^{*}) 
\nonumber \\ 
&\quad 
-2\Re(\alpha_{2}\alpha_{3}^{*})
+2\Re(\mu\alpha_{2}^{*}\alpha_{4})
-2\Re(\mu\alpha_{3}^{*}\alpha_{4})
\Bigr].
\label{eq:inv-1}
\end{eqnarray}
Next, we show $|\gamma_{3}|=1$. 
It is easy to check the equations, 
\begin{eqnarray*} 
&{}& \Re(\mu\alpha_{1}\alpha_{j}^{*})
=\, -\Re(\mu^{*}\alpha_{1}\alpha_{j}^{*})
=\frac{m}{\sqrt{1+m^{2}}}a_{1}a_{j},\quad 
j=2, 3, \\ 
&{}& \Re(\mu^{*}\alpha_{j}^{*}\alpha_{4})
=\, -\Re(\mu\alpha_{j}^{*}\alpha_{4})
=\, -\, \frac{m}{\sqrt{1+m^{2}}}a_{j}a_{4},\quad 
j=2, 3, \\ 
&{}& \Re(\mu^{*2}\alpha_{j}\alpha_{k}^{*})
= \frac{1-m^{2}}{1+m^{2}} \Re(\alpha_{j}\alpha_{k}^{*})
= \frac{1-m^{2}}{1+m^{2}} a_{j}a_{k},\quad 
(j,k)=(1,4), (2,3), 
\end{eqnarray*}
by Proposition \ref{prop:06}. 
By using these equations together with 
$$
\frac{1\mp m^{2}}{1\pm m^{2}}
= \frac{2}{1\pm m^{2}}\, -1,
$$ 
we have 
\begin{eqnarray*} 
|\gamma_{3}|^{2}&{=}&
\Gamma_{0}^{2}
\Bigl[ 
\sum_{j=1}^{4}|\alpha_{j}|^{2} 
+2\Re(\mu\alpha_{1}\alpha_{2}^{*})
+2\Re(\mu^{*}\alpha_{1}\alpha_{3}^{*})
+2\Re(\alpha_{1}\alpha_{4}^{*})  
+2\Re(\mu^{*2}\alpha_{2}\alpha_{3}^{*}) \\ 
&{}&\qquad 
+2\Re(\mu\alpha_{2}^{*}\alpha_{4})
+2\Re(\mu^{*}\alpha_{3}^{*}\alpha_{4})
\Bigr] \\ 
&{=}& 
\textrm{right hand side of (\ref{eq:inv-1})}.
\end{eqnarray*}
Thus, we have $|\gamma_{3}|=1$, 
and then, we reach 
$$
U=
\gamma_{3}
\left(
\begin{array}{cc}
\gamma_{1} & -\gamma_{2}^{*} \\ 
\gamma_{2} & \gamma_{1}^{*}
\end{array}
\right)
\in U(1)S\mathbb{H}=U(2).
$$

Thus, what we have to show is that every 
$\psi\in D(H_{U})$ satisfies the boundary condition (\ref{BC-alpha}). 
Insert the boundary values $\psi_{\sharp}(+\Lambda)$ and 
$\psi_{\sharp}(-\Lambda)$ with expressions, 
$$
\left\{ \begin{array}{l}
\psi_{\sharp}(+\Lambda)
=c_{R}\psi_{R\sharp}^{+}(+\Lambda)
-c_{L}\gamma_{3}\gamma_{2}^{*}\psi_{R\sharp}^{-}(+\Lambda)
+c_{R}\gamma_{3}\gamma_{1}^{*}\psi_{R\sharp}^{-}(+\Lambda), \\  
\psi_{\sharp}(-\Lambda)
=c_{L}\psi_{L\sharp}^{+}(-\Lambda)
+c_{L}\gamma_{3}\gamma_{1}\psi_{L\sharp}^{-}(-\Lambda)
+c_{R}\gamma_{3}\gamma_{2}\psi_{L\sharp}^{-}(-\Lambda), 
\end{array}\right.
\sharp=\uparrow, \downarrow,
$$ 
into the boundary conditions, 
$$
\left\{ \begin{array}{l}
\psi_{\uparrow}(+\Lambda)
=\alpha_{1}\psi_{\uparrow}(-\Lambda)
+\alpha_{2}\psi_{\downarrow}(-\Lambda), \\ 
\psi_{\downarrow}(+\Lambda)
=\alpha_{3}\psi_{\uparrow}(-\Lambda)
+\alpha_{4}\psi_{\downarrow}(-\Lambda).  
\end{array}\right.
$$
Then, by using the arbitrariness of the coefficients $c_{L}$ 
and $c_{R}$ in $D(H_{U})$ 
and noting the fact $\gamma_{3}^{-1}=\gamma_{3}^{*}$,  
we can show that the condition $D(H_{U})=D(H_{\alpha})$ 
is equivalent to the system of the following system of equations: 
\begin{eqnarray}
&{}& (\alpha_{1}+\mu^{*}\alpha_{2})\gamma_{1}+\gamma_{2}^{*}
=\gamma_{3}^{*}(-\alpha_{1}+\mu\alpha_{2}),
\label{eq:inverse-2} \\  
&{}& (\alpha_{1}+\mu^{*}\alpha_{2})\gamma_{2}-\gamma_{1}^{*}
=\gamma_{3}^{*}, 
\label{eq:inverse-3} \\  
&{}& (\alpha_{3}+\mu^{*}\alpha_{4})\gamma_{1}-\mu^{*}\gamma_{2}^{*}
=\gamma_{3}^{*}(-\alpha_{3}+\mu\alpha_{4}), 
\label{eq:inverse-4} \\  
&{}& (\alpha_{3}+\mu^{*}\alpha_{4})\gamma_{2}+\mu^{*}\gamma_{1}^{*}
=\mu\gamma_{3}^{*}.
\label{eq:inverse-5}
\end{eqnarray}
Then, we can show that our $(\gamma_{1}, \gamma_{2}, \gamma_{3})$ 
is a solution of this system of equations: 

Noting $\mu+\mu^{*}=2/\sqrt{1+m^{2}}$ and $a_{1}a_{4}+a_{2}a_{3}=1$, 
we have $\mu a_{1}a_{4}+\mu^{*}a_{2}a_{3}=
2/\sqrt{1+m^{2}}-\mu^{*}a_{1}a_{4}-\mu a_{2}a_{3}$. 
Thus, we realize that our $\gamma_{1}, \gamma_{2}, \gamma_{3}$ 
satisfy (\ref{eq:inverse-2}) as 
\begin{eqnarray*}
(\alpha_{1}+\mu^{*}\alpha_{2})\gamma_{1}+\gamma_{2}^{*}
&{=}& 
ie^{i\theta}\Gamma_{0}
\Bigl[
-\mu^{*}a_{1}^{2}+i(1-\mu^{*2})a_{1}a_{2}
-ia_{1}a_{3}-\mu^{*}a_{1}a_{4} \\ 
&{}&\qquad\qquad  
-\mu^{*}a_{2}^{2}-\mu a_{2}a_{3}+ia_{2}a_{4}
\Bigr] \\ 
&{=}&
\gamma_{3}^{*}(-\alpha_{1}+\mu\alpha_{2}).
\end{eqnarray*}
Using $\mu\mu^{*}=1$ and $\mu^{*2}=2\mu^{*}/\sqrt{1+m^{2}}-1$, 
we can show that our $\gamma_{1}, \gamma_{2}, \gamma_{3}$ 
satisfy (\ref{eq:inverse-3}) as 
$$
(\alpha_{1}+\mu^{*}\alpha_{2})\gamma_{2}-\gamma_{1}^{*}
= \Gamma_{0}
\left(
i\mu^{*}a_{1}-\mu^{*2}a_{2}-a_{3}+i\mu^{*}a_{4}
\right)  
= \gamma_{3}^{*}.
$$
Combining $\mu^{*2}=2\mu^{*}/\sqrt{1+m^{2}}-1$ and $a_{1}a_{4}+a_{2}a_{3}=1$, 
we have $-\mu^{*2}a_{1}a_{4}-a_{2}a_{3}=
a_{1}a_{4}^{*}+\mu^{*2}a_{2}a_{3}-2\mu^{*}/\sqrt{1+m^{2}}$. 
Using this equation and $\mu\mu^{*}=1$, 
we can show that our $\gamma_{1}, \gamma_{2}, \gamma_{3}$ 
satisfy (\ref{eq:inverse-4}) as 
\begin{eqnarray*}
(\alpha_{3}+\mu^{*}\alpha_{4})\gamma_{1}-\mu^{*}\gamma_{2}^{*}
&{=}& i\Gamma_{0}e^{i\theta}
\Bigl[
-i\mu^{*}a_{1}a_{3}+a_{1}a_{4}+\mu^{*2}a_{2}a_{3}+i\mu^{*}a_{2}a_{4} \\ 
&{}&\qquad\qquad 
+a_{3}^{2}+i(\mu-\mu^{*})a_{3}a_{4}+a_{4}^{2}
\Bigr] \\
&{=}& \gamma_{3}^{*}(-\alpha_{3}+\mu\alpha_{4}). 
\end{eqnarray*}
Using $\mu\mu^{*}=1$ and $\mu+\mu^{*}=2/\sqrt{1+m^{2}}$, 
we know that  our $\gamma_{1}, \gamma_{2}, \gamma_{3}$ 
satisfy (\ref{eq:inverse-5}) as 
\begin{eqnarray*}
(\alpha_{3}+\mu^{*}\alpha_{4})\gamma_{2}+\mu^{*}\gamma_{1}^{*}
=\Gamma_{0}
\left(
ia_{1}-\mu^{*}a_{2}-\mu a_{3}+ia_{4}
\right)
=\mu\gamma_{3}^{*}.
\end{eqnarray*}
 
Therefore, consequently, we can complete the proof of our proposition. 

\section{Conclusion}
\label{sec:Conclusion}

We have proved that all the boundary conditions of wave functions of our Dirac particle 
are completely classified into the two types. 
For the case where the electron's wave functions do not pass through the junction, 
their boundary condition can be described by two parameters, $\gamma_{L}, \gamma_{R}\in\mathbb{C}$ 
with $|\gamma_{L}|=|\gamma_{R}|=1$, determined by von Neumann's theory. 
In the case where the wave functions do pass through the junction, 
the boundary condition is described by Benvegn\`{u} and D\c{a}browski's four-parameter family, 
and then, their four parameters can actually be described 
by three parameters, $\gamma_{1}, \gamma_{2}, \gamma_{3} 
\in \mathbb{C}$ with $|\gamma_{1}|^{2}+|\gamma_{2}|^{2}=|\gamma_{3}|=1$ and $\gamma_{2}\ne 0$, 
determined by von Neumann's theory. 
These results stem from our one-to-one correspondence formulae, 
Eqs.(\ref{eq:TJF1}) and (\ref{eq:TJF2}) 
with Propositions \ref{prop:03'} and \ref{prop:06}.

Let us make small two remarks at the tail end of this paper. 
Using our method, we can completely classify the boundary conditions 
of all self-adjoint extensions of the minimal Schr\"{o}dinger 
operator, too \cite{HK13-S}. 
In the Dirac operator's case, there is no effect of 
the length of junction in the boundary condition. 
However, in the Schr\"{o}dinger operator's case\textcolor{red}{,} 
we can find it in the boundary condition. 
We have not understand any strictly physical reason 
why the Schr\"{o}dinger particle feels the length 
$2\Lambda$ of the junction, but the Dirac particle does not. 
We conjecture that the speed of the particle is concerned 
with the reason.

\section*{Acknowledgments}
One of the authors (M.H.) acknowledges the financial support from JSPS, 
Grant-in-Aid for Scientific Research (C) 23540204. 
He also expresses special thanks to 
Kae Nemoto and Yutaka Shikano for the useful discussions 
with them.

\end{document}